\let\underline\ULunderline
\definecolor{LimeGreen}{rgb}{0.2, 0.8, 0.2}
\newcommand{\eps}{\epsilon}
\newcommand{\I}{\mathcal{I}}
\newcommand{\C}{\mathcal{C}_{auth}}
\newcommand{\Z}{\mathbf{Z}}
\newcommand{\QKD}{\mathcal{QKD'}}
\newcommand{\QL}{\mathcal{QL}}
\newcommand{\SD}{\mathcal{SD}}
\newcommand{\PP}{\mathcal{PP}}
\newcommand{\A}{\mathcal{A}}
\newcommand{\T}{\mathcal{T}}
\newcommand{\U}{\mathcal{U}}
\newcommand{\R}{\mathcal{R}}
\renewcommand{\S}{\mathcal{S}}
\newcommand{\Ho}{\mathcal{H}}
\newcommand{\D}{\mathcal{D}}
\newcommand{\M}{\mathcal{M}}
\renewcommand{\P}{\mathcal{P}}
\newcommand{\bs}{\setminus}
\newcommand{\SIM}{\mathcal{SIM}}
\newcommand\numberthis{\addtocounter{equation}{1}\tag{\theequation}}
\newcommand{\splitatcommas}[1]{%
    \begingroup
    \begingroup\lccode`~=`, \lowercase{\endgroup
        \edef~{\mathchar\the\mathcode`, \penalty0
            \noexpand\hspace{0pt plus 1em}}%
        }\mathcode`,="8000 #1%
    \endgroup
}
\newtheorem{assumption}{Assumption}
\newcounter{protocol}
\begin{document}
\title[Security of a secret sharing protocol on the Qline]{Security of a secret sharing protocol on the Qline}

 \author{Alex B. Grilo$^1$ \and Lucas Hanouz$^{1,2,*}$ 
 %\thanks{Corresponding author: lucas.hanouz@lip6.fr}
 \and Anne Marin$^2$ }
 \institute{
 Sorbonne Université, CNRS, LIP6, Paris, France
 \and
 VeriQloud, Paris, France
 }
\maketitle

\vspace{-1.2cm}
\begin{center}
    \small{$^{*}$ Corresponding author: lucas.hanouz@lip6.fr}
\end{center}

\keywords{quantum cryptography \and secret sharing}

\begin{abstract}
Secret sharing is a fundamental primitive in cryptography, and it can be achieved even with perfect security. However, the distribution of shares requires computational assumptions, which can compromise the overall security of the protocol. While traditional Quantum Key Distribution (QKD) can maintain security, its widespread deployment in general networks would incur prohibitive costs.

In this work, we present a quantum protocol for distributing additive secret sharing of 0, which we prove to be composably secure within the Abstract Cryptography framework. 
Moreover, our protocol targets the Qline, a recently proposed quantum network architecture designed to simplify and reduce the cost of quantum communication.
Once the shares are distributed, they can be used to securely perform a wide range of cryptographic tasks, including standard additive secret sharing, anonymous veto, and symmetric key establishment.
\end{abstract}

\section{Introduction}

Secret sharing is a fundamental cryptographic primitive that enables the sharing of a secret among multiple parties, such that only specific predefined sets of shares allow to recover the original secret, while any other sets give no information about it. 

While classical secret sharing protocols, such as those introduced by Shamir and Blakley \cite{S79,B79}, achieve perfect (information-theoretic) security, the effective deployment of secret sharing protocols faces a critical challenge: the shares must be securely distributed to the participants, ensuring their privacy and~integrity.

In most cases, the share distribution is secured using standard cryptographic approaches such as public-key encryption, which lowers the overall security of the scheme. More concretely, using classical cryptography to distribute the shares, the security of data transmission holds under structured computational assumptions and cannot achieve the information-theoretic guarantees that secret sharing schemes provide.

This limitation, however, can be circumvented if we consider quantum cryptography. It is now well known that Quantum Key Distribution (QKD) protocols enable the establishment of a secure communication channel from an authenticated classical channel by leveraging the fundamental principles of quantum mechanics \cite{BB84,TL17,Portman_Renner14,BF12}. 
However, while QKD enhances security, its practical implementation introduces significant challenges. Establishing secure channels using QKD requires expensive quantum hardware and suffers from low efficiency. Moreover, scaling such systems to securely distribute the shares of a secret sharing scheme would be prohibitively inefficient, as both the infrastructure needed to support QKD and the required amount of secure communication grows rapidly with the number of participants and communication links.

Recently, a quantum network architecture called \emph{Qline} has been introduced \cite{doosti-hanouz23establishing} as an attempt to increase the connectivity of QKD networks, reduce their costs and improve their accessibility to end-users. The Qline consists of a standard QKD setup where a single qubit source and detector are linked, but with intermediate nodes added in between, which only have the ability to perform single-qubit rotations---a task that can be implemented with much cheaper devices. We will call the nodes of the Qline \emph{players} in this work. 

Despite having a simpler setup, it has been recently shown that Qline enables several interesting cryptographic protocols. Clementi \textit{et al.} demonstrated a Quantum-enhanced Classical multiparty computation protocol on the Qline \cite{CPEW17}. Later, Doosti et al. \cite{doosti-hanouz23establishing} showed that any pair of player can establish symmetric keys with the same level of security as QKD with the help of trusted end nodes, and Polacchi et al. \cite{polacchi2023multi} introduced a protocol for secure multi-client delegated quantum computing for a Qline connected to a quantum computer. In all of these protocols, the main idea is to use Qline to allow a pair of nodes to perform a secure operation (such as communication or computation). We notice that while this can be used to distribute shares of a secret sharing scheme with information-theoretic security more cost-effectively compared to pairwise QKD, it still suffers from a linear overhead on the number of shares for it.

Our main result is to show that additive secret sharing can directly and securely be performed on the Qline without scaling overhead on the shares. The main novelty is to exploit the {\em global} correlations that Qline provides us to achieve additive secret sharing of $0$ (i.e. the message is fixed).
We notice that previous works have introduced propositions exploiting this idea\cite{old_SSqline_experimental,Differential-phase-shift}, but they lack a security proof against general attacks, composable, and under the most general dishonest participant scenarios. On the other hand, we achieve a protocol that we prove to be secure in the composable framework of Abstract Cryptography\cite{AbstractCrypto}, while preserving the benefits of the Qline architecture regarding simplicity and cost of implementation.

In short, our protocol works as follows. The first player of the Qline sends a random BB84 states\footnote{In fact, we use the Hadamard Basis and the circular basis, but we prefer to continue the rough exposition with more well-known states.}, each intermediate player re-randomizes the states, and the final player chooses to measure the received qubits in the Hadamard or computational basis uniformly at random. Then, the players perform a classical protocol to check the integrity of the shares and to correct any error incurred by the noise of quantum devices. In order to prove the security of our protocol, we require that at least $2$ players are honest (which is natural for additive secret sharing of $0$), and that the players share a classical authenticated channel with random subset broadcast (see \Cref{sec:discussion} for a formal definition and a discussion on how to implement it). The core of the technical contribution is to show that our protocol is secure in a composable way.

This work was primarily motivated by the recent implementation of a Qline at VeriQloud, Paris, France. Our protocol is specifically designed to be compatible with their architecture, and simulations indicate that the sharing between four participants of a $2$ Mbit secret can be expected to be achieved in less than $5$ minutes on their setup.

To illustrate the protocol’s performance, we compare it in \Cref{tab:perf_comp} with the following alternatives of using classical secret sharing, along with either QKD or Qline's key establishment to distribute the shares.
In the following table, for each of these alternatives and depending on the number $J$ of players, we show the \emph{cost} (hardware requirement) of an architecture allowing any player to share a secret, and the \emph{efficiency}, measured in the number of required qubits to transmit to share one secret.
We use realistic and identical parameters and targeted metrics.

\begin{table}[h] 
\centering
        \begin{tabular}{l | c | c | c}
            \toprule
             & \textbf{\small QKD + classical} & \textbf{\small Qline + classical} & \textbf{\small Our protocol}

             \\
                    
            \textbf{} & \textbf{\small secret sharing} &\textbf{\small secret sharing} & \textbf{} \\
  \hline
            \textbf{\small Cost} & & & \\
            {\small Number of quantum channels} & $J^2$ & $1$ & $1$ 
            \\
            \hline
\textbf{\small Efficiency} & & & \\
            {\small Number of qubits to receive} & $J \times 10^7$ & $J \times 10^7$ & $10^7$ \\
                \bottomrule
        \end{tabular}
    \caption{Comparison of solutions with 1.7 Mbits share size and $ 10^{-11}$ distinguishing~advantage} \label{tab:perf_comp}
\end{table}

\paragraph{Applications.} We describe now some of the applications of the primitive that we implement, i.e. secret sharing of the bit $0$, when used along with classical authenticated communication. 
First, we notice that we can implement standard additive secret sharing where the dealer chooses a secret bit string to share instead of having this value fixed to $0$. To achieve such a primitive from shares of $0$, the dealer can publish the one-time-pad encryption of his secret with his share as the key. 

Another application is anonymous veto, also known as the Dining Cryptographers Problem \cite{dining_cryptographers}, which is also the secure multi-party computation of the multiple-input boolean OR function. To achieve anonymous veto from $n$ sharings of $0$, the players perform $n$ rounds of announcement with different announcement orders such that each player is last in one round. for all rounds, following the corresponding announcement order, the players broadcast either their share, or a random string instead if they wish to veto. For each round, the sum of the announcements is then compared to the all $0$ string: inequality shows that at least one player vetoed. See \cite{BT07} for a similar construction.

Finally, symmetric key establishment can be achieved by asking all players but two to reveal their share and having one of the two remaining players XOR it's share with all the thereby-revealed ones. Previous works\cite{doosti-hanouz23establishing} already introduced the corresponding protocol, but their security proof requires the honest collaboration of the end nodes of the network. We discuss in \Cref{sec:discussion} how, for this particular application, our proof amounts to the same result, \emph{without} this trust assumption.

Remarkably, for these applications, our protocol can be run in an offline phase, to then only in a later online phase, decide the cryptographic task to perform along with the set of involved players and use the shares together with classical authenticated communications to securely produce the desired resources at a high bit rate. This opportunity is all the more meaningful when considering the slow rates imposed by current quantum hardware.\\

The remainder of the manuscript proceeds as follows. We introduce preliminary information in \Cref{sec:prelim}. We then present our assumptions in \Cref{sec:assumptions}, our protocol in \Cref{sec:protocol}, and it's security proof in \Cref{sec:security}.

\subsection*{Acknowledgments} \label{par:acknowledgments}

\hspace*{\parindent}We would like to thank Georg Harder and Anthony Leverrier for their valuable assistance regarding the question of syndrome leakage. We thank Céline Chevalier for her guiding insights on technical parts of the proof.

ABG is supported by the European Union's Horizon Europe Framework Program under the Marie Sklodowska Curie Grant No. 101072637, Project Quantum-Safe Internet (QSI).  This work is part of HQI initiative (www.hqi.fr) and is supported by France 2030 under the French National
Research Agency award number ANR-22-PNCQ-0002. This work was funded by the European Union's Horizon Europe research and innovation program under grant agreement No. 101102140 – QIA Phase 1.

\section{Preliminaries} \label{sec:prelim}

We recall the notation of basic concepts of quantum information theory in \Cref{sec:quantum-information}. For a more detailed introduction to the topic, we refer to~\cite{nielsen&chuang}.
In \Cref{sec:AC_presentation}, we present the abstract cryptography framework. Finally, in \Cref{sec:def-qline}, we review the Qline architecture.

We defer to \Cref{sec:notations} for a summary of the notation used throughout this paper.

    \subsection{Quantum information theory}\label{sec:quantum-information}
        We assume basic knowledge about the theory of quantum communication and computing.
        
        We denote the eigenstates of the Hadamard basis by $\ket{+} = \frac1{\sqrt2}\Big(\ket{0} + \ket{1}\Big)$ and $\ket{-} = \frac1{\sqrt2}\Big(\ket{0} - \ket{1}\Big)$. The classical outcome of a measurement in the Hadamard basis yields $0$ if $\ket{+}$ is measured and $1$ if $\ket{-}$ is measured.

        We denote the eigenstates of the circular basis by $\ket{+_i} = \frac1{\sqrt2}\Big(\ket{0} + i\ket{1}\Big)$ and $\ket{-_i} = \frac1{\sqrt2}\Big(\ket{0} - i\ket{1}\Big)$ with $i^2 = -1$. By convention in this paper, we consider that the classical outcome of a measurement in the circular basis yields $0$ if $\ket{-_i}$ is measured and $1$ if $\ket{+_i}$ is measured. This mismatch in the notation between Hadamard and circular basis will improve the clarity of later equations.

        We denote the Pauli $\Z$ gate
          $\begin{bmatrix}
              1&0\\
              0&-1
          \end{bmatrix}$, $\Z^\frac 12$ being the phase gate 
          $\begin{bmatrix}
              1&0\\
              0&i
          \end{bmatrix}$ \vspace{4mm}.

        For a state $\sigma_R$ on registers $R$ and $S$, $Tr_R(\sigma_S)$ denotes the state obtained by tracing out the register $R$.
        
        The trace distance $Tr|\sigma - \gamma|$ is a measure of the distinguishability between two states $\sigma$ and $\gamma$. We write $\sigma \approx_\eps \gamma$ when $Tr|\sigma - \gamma| \leq \eps$

        Throughout this article, we use the term single-qubit state to denote a two-dimensional, potentially mixed, state.

    \subsection{The abstract cryptography framework} \label{sec:AC_presentation}
    
        We prove the security of our protocol using the \emph{Abstract Cryptography} framework~\cite{AbstractCrypto}. This framework is designed to guarantee the composability of the security of cryptographic constructions while remaining as general as possible concerning security notions. 
        
        In this framework, cryptographic protocols are defined as systems: abstract objects with \emph{interfaces} that define all possible inputs and outputs of the said system. Each interface represents an entity's access to the system. A cryptographic construction typically includes \emph{player} interfaces, where the interactions between the honest players and the system occur, as well as an adversarial interface, called the \emph{outer} interface, which encapsulates the attacker's capabilities.
        
        Systems can be composed, either in parallel or sequentially.
        The parallel composition of two systems $\R$ and $\S$, denoted $R||S$, is a system with the interfaces of both sub-systems. It simply describes the fact that these systems are put side by side and seen as a whole, unique system. The behavior of the composed system is naturally defined from the independent behaviors of the sub-systems (c.f \Cref{fig:def-composition}).
        
        \begin{figure}[!ht]
            \centering 
            \includegraphics[scale=0.50]{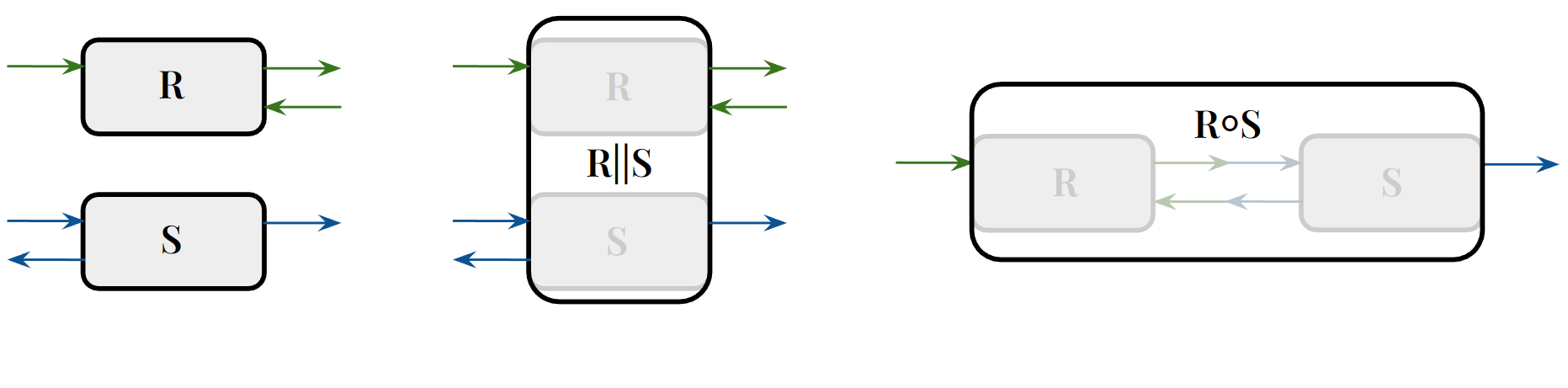}
            \caption{\centering  Composition of abstract systems}
            \label{fig:def-composition}
        \end{figure}
        
        The sequential composition describes the fact that the output of a system can be used as input by other systems. For instance, two systems $\R$ and $\S$ can be sequentially composed \emph{at the interfaces $i_\R$ of $\R$ and $j_\S$ of $\S$} if each input (respectively output) of these interfaces can be associated with a unique output (respectively input) of the other interface. When it is clear at which interfaces a sequential composition occurs, we denote it  $R\circ S$ or simply $RS$ without specifying the interfaces $i_R$ and $j_S$. The resulting system has all the interfaces of both sub-systems except from $i_R$ and $j_S$. %
        
        In this framework, the security of a cryptographic scheme is defined as the ``closeness'' of that system to an ideal version of it. In this work, this closeness is measured using the \emph{distinguishing} pseudo-metric (as in QKD security proofs~\cite{rennerabstract}), which is defined as the maximum \emph{distinguishing advantage} on two systems, over all computationally unbounded entities (called \emph{distinguishers}).
        The distinguishing advantage of a distinguisher on two sets of signals (inputs and/or outputs) is the value $\eps$ such that when given either the first set or the second one with equal probability $\frac{1}{2}$, the distinguisher succeeds in guessing which one it is with probability $\frac{1}{2} + \eps$.
        The distinguishing advantage on two systems $P$ and $\tilde P$ is the distinguishing advantage on their inputs and outputs.

        We wright $P \approx_\eps \tilde P$ when the distance (measured by the distinguishing pseudo-metric) between the systems (or signals) $P$ and $\tilde P$ is no more than $\eps$.
        
        Formally, A protocol $P$ of ideal version $\tilde P$ is said to be \emph{$\eps$-secure} if there exists a system $\SIM$ called \emph{simulator} such that $P \approx_\eps \tilde P\circ \SIM$.
        
        The distinguishing pseudo-metric leads to a composable definition of security (Theorem~1 of~\cite{AbstractCrypto}), meaning that the composition of an $\eps_1$-secure and an $\eps_2$-secure system is always $(\eps_1+\eps_2)$-secure.

    \subsection{The Qline Architecture} \label{sec:def-qline}
        A Qline consists in an initial node that can generate a given range of qubit states, an arbitrary number of intermediate nodes that can apply certain single qubit operations to these qubits, and a final node that can measure them in a chosen basis. An example of a Qline with four players is depicted in \Cref{fig:Qline}.

        \begin{figure}[ht] \label{fig:Qline}
        \begin{center}
        \begin{tikzpicture}
        	\fill[gray] 	(-0.5,0) arc (270:90:0.5cm) ;
        	\fill[gray]	(-0.6,0) rectangle (0,1);
        	\draw	(0,0.5) -- (1.5,0.5) ;
        	\fill[gray] 	(1.5,0) rectangle (2.5,1);
        	\draw	(2.5,0.5) -- (4,0.5) ;
        	\fill[gray] 	(4,0) rectangle (5,1);
        	\draw	(5,0.5) -- (6.5,0.5) ;
        	\fill[gray]	(6.5,0) rectangle (7.1,1);
        	\fill[gray] 	(7,1) arc (90:-90:0.5cm) ;
        	\node at (-0.5,-0.3) {Player 1};
        	\node at (2,-0.3) {Player 2};
        	\node at (4.5,-0.3) {Player 3};
        	\node at (7,-0.3) {Player 4};
        \end{tikzpicture}
        \caption{A Qline with four players}
        \end{center}
        \end{figure}
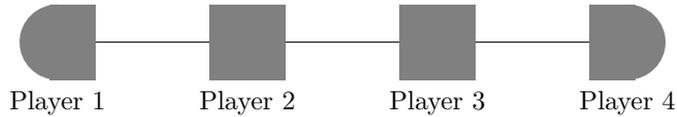

        In this work, we consider a Qline with the following properties:
        \begin{itemize}
        \item The first node can generate and send the four following states: $\ket{+}$, $\ket{-}$, $\ket{+_i}$, $\ket{-_i}$.
        \item The intermediate nodes can apply the $\Z^x$ operation to single qubit states, with $x\in\{0, \frac12, 1, \frac32\}$. 
        \item The last node can measure single qubit states in either the Hadamard or the circular basis.
        \end{itemize}

\section{Adversarial model}
\subsection{Assumptions} \label{sec:assumptions}
    We study the security of the protocol under an \emph{active}, \emph{unbounded}, and \emph{participant} adversarial model. This means that we consider that the adversary has access to noiseless, unbounded, quantum and classical computational power and storage (\emph{unbounded}), that they can attack the protocol during its execution (\emph{active}), and most importantly that they can corrupt parties involved in the protocol, meaning that they take complete control over their knowledge and behavior (\emph{participant}). An uncorrupted player is said to be \emph{honest}. 
    
    In order to prove the security of the protocol, we require the following assumptions.
    \begin{assumption} \label{ass:2_pl_honest}
        At least two players are honest.
    \end{assumption}
    Note that this is a minimal assumption for our use case. Having only one honest player would not achieve any interesting result as the goal of the protocol is precisely that the secret of any player can be recovered using all the other's secrets.

    \begin{assumption} \label{ass:Urandom}
        Perfect randomness: Each player has access to an independent uniform random number generator.
    \end{assumption}

    \begin{assumption} \label{ass:Sealed_labs}
        Sealed laboratories: No unwanted information transfer occurs at the frontier of the honest players laboratories.
    \end{assumption}
    This is arguably the most challenging assumption to ensure in practice. It prevents side-channel attacks, which are inherently difficult to defend against. In particular, this assumption also encompasses that honest players receive and send only single qubit states, meaning that no higher-dimensional states arise when expecting two-dimensional ones.

    \begin{assumption} \label{ass:CAC}
        Classical authenticated channel with random subset broadcast.
    \end{assumption}
    We describe \Cref{ass:CAC} in more details. The players are assumed to have access to a classical communication channel, which allows them to broadcast classical messages to all the other players, while ensuring the three following main features:
    \begin{enumerate}
        \item \textbf{Authentication:} The messages going through this channel cannot be tampered with, and come with the identity of the sender.
        \item \textbf{Random subset broadcast:}
        Whenever required, the players can perform a \emph{random subset broadcast} over the channel. In this procedure, each player first inputs an ordered list of values, and, in a second stage, the procedure randomly samples a subset of indices for which the corresponding values of all players are revealed and broadcast to everyone. See \Cref{sec:discussion} for more details.
        \item \textbf{Distributed coin-flipping:} Whenever required, the players can perform a \emph{distributed coin-flipping} over the channel. This procedure allows the players to agree on a value that is uniformly random and independently sampled. The procedure can abort.
    \end{enumerate}
    
    Such a channel is required to prevent potential malicious players to cheat by making choices that depend on the other player's announcements.
    We further discuss this assumption in \Cref{sec:discussion} and provide constructions of such a channel using standard assumptions.\\

\subsection{Discussion on Assumption \ref{ass:CAC}} \label{sec:discussion}
    \Cref{ass:CAC} and in particular the aspect of \emph{random subset broadcast} is rather specific to our work and is unconventional in cryptography. While it could be replaced with other standard computational assumptions, we chose to retain \Cref{ass:CAC} as we believe it best captures the purpose and significance of the assumption while highlighting the key protocol components that rely on it for security.
    In this section, we discuss different approaches to satisfy \Cref{ass:CAC} based on more common assumptions.

    \textbf{Authenticated channel:}
    Authenticated channels are well-studied resources that can be obtained from many different cryptographic solutions and for different paradigms \cite{J90, WegmanCarter81, SPHINCS+, NTRU}. Remarkably, using pre-shared keys, authenticated channels can be obtained information-theoretically. It is a required assumption in Quantum Key Distribution protocols \cite{B79,BB84}.

    \textbf{Distributed coin-flipping:}
    \emph{distributed coin-flipping} is a protocol in which the participants agree on a random value with the guarantee that the probability distribution of the outcome is uniform, no matter what an adversary tries to do. We refer to \cite{distr_coin_flip} for a thorough study of distributed coin-flipping protocols and the required assumptions in our setup.
    
    A particular situation that appears interesting enough to be mentioned is when one honest player is identified. This situation can arise when it has already been decided what the shares are going to be used for. If the goal is for instance to use the shares for the sharing of a document, the document holder is by definition honest. In this case, a simple implementation of the coin-flipping that does not involve any further assumption is to make the honest player sample the random value and simply announce it to the others.

    \textbf{Random subset broadcast:}
    A random subset broadcast is a procedure that breaks down in two stages. In the first stage, each player chooses (and delivers to the procedure) a list of values indexed by a given set $S$ and a unique size $s$ is chosen. In the second stage, a subset of $S$ of size $s$ is randomly sampled and the procedure reveals to all players the values of each list that correspond to this subset, while the other values remain hidden.

    A perhaps quite natural construction of such a procedure would be to use a commitment scheme and to ask each player to commit on each of his values in the first stage, to then in the second stage perform a distributed coin flip to randomly sample the subset, and finally to open their commitments of the required values.
    Commitment schemes rely on the assumption of the existence of a One-Way Function \cite{Naor}. While this assumption is equivalent to the security of secret key encryption and most currently used implementations of One-way functions are widely believed to be secure even against quantum computers (\cite{C24sponge}), relying on one-way functions bottlenecks the security. One must however notice that (if using statistically hiding commitments) the binding of the commitment scheme is only required during the time of the procedure for the final shares to be information-theoretic secure. This property is called \emph{everlasting security} and is highly desirable when seeking long-term security.

    A very particular, yet relevant situation is when all honest players are identified. For instance, this occurs when it has already been decided that the shares resulting from the protocol will be used to establish secret symmetric keys between two fixed players, as studied in \cite{doosti-hanouz23establishing}. In this specific case, a simple implementation of the random subset broadcast procedure that does not involve any computational assumption is as follows: First, all dishonest players broadcast all of their values. Then a honest player samples and announces the subset, and all honest players announce only their values that correspond to the subset.

\section{The Protocol} \label{sec:protocol}

    In this section we present a quantum-assisted secret sharing protocol that is supported by the Qline architecture described in \Cref{sec:def-qline}. 
   This is the \emph{prepare-and-measure} version of the protocol
   and we will introduce the entanglement-based version later.

    The protocol involves $J$ players, each having exclusive control of one device of the Qline architecture. The players are named according to \Cref{sec:def-qline}.

    The protocol can be decomposed into two major steps: the \emph{State distribution} step involving quantum communication and the \emph{Post-processing} step which only requires classical computation and communication.
    
\tcbset{colframe=black!80, colback=gray!10, boxrule=0.5mm, arc=3mm, fonttitle=\large}

\begin{tcolorbox}[title=Parameters]
        
        The protocol is parameterized by the following variables:
        \begin{enumerate}[noitemsep,topsep=0pt]
            \item A security parameter $N$.
            \item A correctness parameter $\eta$.
            \item An integer $\tau'$ such that $\tau' = \omega(log(N))$ and $\tau'=o(N)$.
        \end{enumerate}
        
\end{tcolorbox}

\begin{tcolorbox}[title=State distribution (prepare-and-measure version)]

        \begin{enumerate}[noitemsep,topsep=0pt]
            \item Each player $j\in[J]$ samples $2N$ random bits $(b_n^j)_{n \in [N]}$ and $(v_n^j)_{n \in [N]}$ and computes $x_n^j = \frac{b_n^j}{2} + v_n^j$, for all $n\in[N]$.
            \item Player $1$ generates the state $\ket{\Phi^1} = \bigotimes\limits_{n \in [N]} \Z^{x^1_n}\ket{+}$ and sends it to player $2$.
            \item Players $j\in\{2, ... J-1\}$ receive a state $\tilde\Phi^{j-1}$ from player $j-1$, apply the operation $U^j = \bigotimes\limits_{n\in[N]} \Z^{x^j_n}$ to it 
            and send the resulting state $\Phi^j = U^j\tilde\Phi^{j-1}U^{j\dagger}$ to player $j+1$.
            \item Player $J$ receives $\tilde\Phi^{J-1}$ from player $J-1$, and measures each qubit $n\in[N]$ in either the Hadamard basis if $b_n^j = 0$ or in the circular basis if $b_n^j = 1$. The classical outcome of these measurements are denoted $(v_n^j)_{n\in [N]}$.
        \end{enumerate}
\end{tcolorbox}

\begin{tcolorbox}[title=Post-processing (part 1)]

 \label{sec:post-processing}
    At any point, if the classical channel fails, the protocol aborts.
        \begin{enumerate}[noitemsep,topsep=0pt]

        \item \textbf{Announcements:}
            \begin{enumerate}[label=\arabic{enumi}.{\arabic*}.,noitemsep,topsep=0pt]
                \item The players perform the first stage of two \emph{random subset broadcast} procedures with respectively their basis choices $b^j_n$ and values $v^j_{n}$ for $n \in [N]$, and then the second stages so that all the values $b^j_n$ for $n \in [N]$ are broadcast to all players, while a random subset $\T'$ of $[N]$ of size $\tau'$ is randomly sampled, and only $v^j_{n}$ for $n \in \T'$ are broadcast.
            \end{enumerate}
            
        \item \textbf{Sifting:}
            \begin{enumerate}[label=\arabic{enumi}.{\arabic*}.,noitemsep,topsep=0pt]
                \item The players compute the indices of the inconclusive rounds $\U = \Big\{n \in [N] :  \bigoplus\limits_{j \in [J]} b_n^j \neq 0 \Big\}$ and discard $b_n^j$ and $v_n^j$ where $n \in \U$. We define $L := N - |\U|$
                We keep the same notation for the remaining values, but adjust the indices of the rounds: 
                \begin{align*}
                    (b_n^j)_{n \in [L], j \in [J]} &:= (b_n^j)_{n \in [N]\bs \U, j \in [J]}\\
                    (v_n^j)_{n \in [L], j \in [J]} &:= (v_n^j)_{n \in [N]\bs \U, j \in [J]}\\
                \end{align*}
                We equivalently adjust the indices of $\T$ such that $\T \subset [L]$ (the rounds in $\T$ are still the rounds previously in $\T' \bs \U$)
            \end{enumerate}

    \end{enumerate}
\end{tcolorbox}

\begin{tcolorbox}[title=Post-processing (part 2)] 
    \begin{enumerate}[noitemsep,topsep=0pt, start=3]
    
        \item \textbf{Error estimation:}
            \begin{enumerate}[label=\arabic{enumi}.{\arabic*}.,noitemsep,topsep=0pt]
                \item The players compute the \emph{Qubit Error Rate}
                    \begin{equation} \label{eq:qber}
                    q = \frac{1}{\tau} 
                    \Big|\Big\{n \in \T :
                    \sum\limits_{j \in [J]} ( 2v^j_n + b^j_n ) = 2 \mod 4
                    \Big\}\Big|
                    \end{equation}
    
                    If $q > \delta $, the parties abort. 
                    
                    \item The players discard $b^j_{n}$ and $v^j_{n}$ for each index $n \in \T$. We define $M = L-\tau $ and again adjust the indices such that the remaining values are $(b_n^j)_{n \in [M], j \in [J]}$ and $(v_n^j)_{n \in [M], j \in [J]}$
            \end{enumerate}
    
        \item \textbf{Error correction:} 
            \begin{enumerate}[label=\arabic{enumi}.{\arabic*}.,noitemsep,topsep=0pt]
                \item Player $J$ updates his values $(v_n^j)_{n\in[M]}$ as
                \begin{equation*}
                    v_n^j := v_n^j \oplus \Big( \frac 12 \big(\sum\limits_{j\in[J]} b_n^j \mod 4\big) \Big)
                \end{equation*}
                \item The players agree on an error correction margin $\nu\in[0, \frac12 - q]$, as well as a \emph{linear} syndrome decoding protocol\footnotemark{} of correction rate $(q+\nu)$ that they will apply on their respective shares $v^{[J]}_{[M]}$.
                \item According to this syndrome decoding protocol, each player $j\in[J-1]$ computes and announces the syndrome $w^j$ of his share $v^J_{[M]}$.
                \item Player $J$ corrects it's share $v^J_{[M]}$ through the syndrome decoding protocol using $\bigoplus\limits_{j\in[J]}w^j$ as the correction syndrome.
                \item \textbf{Correctness check:} The players use the \emph{distributed coin flipping} procedure to randomly sample $f_{cc}$ from a $2$-universal family of \emph{linear} hash functions\footnotemark{} from $\{0, 1\}^{M}$ to $\{0, 1\}^{\eta}$. 
                Each player $j\in[J]$ computes and announces the hash $c^j = f_{cc}(v^j)$ and checks that
                \begin{equation}
                    \bigoplus_{j\in[J]} c^j = 0
                \end{equation}
                If the check fails, the protocol aborts.
            \end{enumerate}
            
        \item \textbf{Privacy amplification:}
            \begin{enumerate}[label=\arabic{enumi}.{\arabic*}.,noitemsep,topsep=0pt]
                \item The players agree on an integer $K<M$ and use the \emph{distributed coin flipping} procedure to sample a function $f_{pa}$ from a $2$-universal family of \emph{linear} hash functions from $\{0, 1\}^{M}$ to $\{0, 1\}^{K}$. 
                They compute their final share as 
                \begin{equation}
                    s^j = f_{pa}(v^j)
                \end{equation}
            \end{enumerate}
    \end{enumerate}

\end{tcolorbox}
\addtocounter{footnote}{-1}
\footnotetext[\value{footnote}]{\label{footnote:err_corr}By syndrome decoding, we refer to a protocol allowing one to compute the syndrome of a message, such that the combined knowledge of this syndrome and a noisy version of the message allows (efficient) computation of the original message. Such a syndrome decoding protocol comes with a \emph{correction rate} such that when the noise is less than or equal to that rate, the correction succeeds except with negligible probability in $N$. Linear syndrome decoding protocols can be derived from linear error correcting codes.}
\addtocounter{footnote}{1}
\footnotetext[\value{footnote}]{\label{ftnote1}Examples of such $2$-universal families of linear hash functions can be found in \cite{WC79}.}

    \subsection{Correctness} \label{sec:correctness}
    
    In this section, we prove the correctness of the protocol.
    For that, we prove in \Cref{prop:correctness} that when the protocol succeeds, the produced shares are correct with high probability. Then, in \Cref{prop:correctness2} we show that when the parties are honest the protocol successfully terminates if the noise in the devices is low enough. The correctness highly depends on the correctness parameter $\eta$ chosen by the players in the protocol during the correctness check step.
    \begin{proposition} \label{prop:correctness}
        Let $\eps_{cor} = 2^{-\eta}$.
        Assuming that the protocol successfully terminates, then with probability at least $1 - \eps_{cor}$, 
        \begin{equation} \label{eq:correctness}
            \bigoplus_{j\in[J]} s^j = 0.
        \end{equation}
    \end{proposition}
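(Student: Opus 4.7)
The plan is to reduce the proposition to a statement about $V := \bigoplus_{j \in [J]} v^j$, the XOR of the players' error-corrected strings, and then to control the probability that $V \neq 0$ given that the correctness check passes. By linearity of $f_{cc}$, the announced hashes satisfy $\bigoplus_{j \in [J]} c^j = f_{cc}(V)$, so successful termination at the correctness-check step (i.e.\ no abort) is equivalent to $f_{cc}(V) = 0$. By linearity of $f_{pa}$, one also has $\bigoplus_{j \in [J]} s^j = f_{pa}(V)$, hence the event $V = 0$ directly implies \eqref{eq:correctness}. It therefore suffices to bound the probability that both $V \neq 0$ and $f_{cc}(V) = 0$ occur.

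Next I would invoke the $2$-universality of the linear family from which $f_{cc}$ is drawn. For any such family from $\{0,1\}^{M}$ to $\{0,1\}^{\eta}$ and any fixed nonzero $v$, rewriting a collision $f_{cc}(x)=f_{cc}(x')$ as $f_{cc}(x \oplus x') = 0$ converts the $2$-universality bound into $\Pr_{f_{cc}}[f_{cc}(v) = 0] \leq 2^{-\eta}$. The crucial timing observation is that $V$ is fully determined by the measurement outcomes and the syndromes exchanged during error correction, all of which take place \emph{before} the distributed coin flipping that samples $f_{cc}$. By the guarantee of \Cref{ass:CAC}, this coin flipping outputs a uniform value independent of the prior transcript, so $f_{cc}$ is sampled independently of $V$. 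Applying the $2$-universality bound conditionally on each possible realization of $V$ gives $\Pr[\text{successful termination} \wedge V \neq 0] \leq 2^{-\eta} = \eps_{cor}$, which, combined with the $V=0$ case, yields the stated bound.

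The main subtlety I anticipate is guaranteeing the independence between $V$ and $f_{cc}$ even when accounting for possible dishonest behavior up to the correctness-check step, since corrupted players may have influenced $V$ via their syndrome announcements or reported measurement outcomes. This is handled entirely by the coin-flipping guarantee of \Cref{ass:CAC}, which ensures the sampled $f_{cc}$ is uniform and independent of every message exchanged before; no further reasoning about the specific adversarial strategy is required for the correctness argument.
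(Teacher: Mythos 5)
Your proposal is correct and follows essentially the same route as the paper's (much terser) proof: linearity of $f_{cc}$ reduces the passed check to $f_{cc}\big(\bigoplus_j v^j\big)=0$, $2$-universality bounds the probability of this for a nonzero argument by $2^{-\eta}$, and linearity of $f_{pa}$ transfers $\bigoplus_j v^j=0$ to $\bigoplus_j s^j=0$. The only difference is that you make explicit the independence of $f_{cc}$ from $V$ via the coin-flipping guarantee of \Cref{ass:CAC} and the derivation of $\Pr[f_{cc}(v)=0]\leq 2^{-\eta}$ from the collision bound, both of which the paper leaves implicit.
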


    \begin{proof}
        After the error correction step of a successful execution of the protocol, the correctness check verified that $\bigoplus\limits_{j\in[J]} f_{cc}(v^j) = 0$. Since $f_{cc}$ is sampled from a $2$-universal {\em linear} hash family, the probability that $\bigoplus\limits_{j\in[J]} v^j \ne 0$ is at most $2^{-\eta}$.
    \end{proof}

    \begin{proposition} \label{prop:correctness2}
    If the parties are honest and the depolarizing noise is $\mu<\delta$, then the protocol successfully terminates except with a probability at most negligible in $N$.
    \end{proposition}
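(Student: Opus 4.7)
The plan is to list every event during an honest execution with depolarizing noise $\mu < \delta$ that can cause the protocol to abort, bound each one individually, and conclude by a union bound. The abort events are: (a) the distributed coin-flipping subroutine fails, which by Assumption~\ref{ass:CAC} occurs with negligible probability when all parties follow the protocol; (b) the error estimation reports $q > \delta$; (c) the syndrome decoding returns a wrong $v^J_{[M]}$; and (d) the correctness check produces $\bigoplus_j c^j \neq 0$. I also want to confirm en passant that, by Chernoff applied to the sifting step (where each round survives with probability $1/2$), the sifted length satisfies $L = \Theta(N)$ and the test-set size $\tau = \Theta(\tau')$ except with probability negligible in $N$, so $\tau = \omega(\log N)$ as required for the concentration bounds that follow.

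For event (b), I would first observe that, under honest execution and i.i.d.\ depolarizing noise of parameter $\mu$ on each channel, the indicator of the event $\sum_j (2 v_n^j + b_n^j) \equiv 2 \pmod 4$ on a sifted round $n$ reduces (after checking how the honest phase contributions cancel in the circular/Hadamard encoding) to an independent Bernoulli trial with mean $\mu$. Since the subset $\T$ is sampled independently from the noise realization, $q$ is an empirical mean of $\tau$ such i.i.d.\ Bernoulli variables, and Hoeffding's inequality gives $\Pr[q > \delta] \leq \exp(-2\tau(\delta-\mu)^2)$, which is negligible in $N$. For (c), I use that the error rate $\tilde q$ on the $M$ non-test sifted rounds also concentrates around $\mu$; because $\T$ is a uniformly random subset of $[L]$ of size $\tau$, Serfling's inequality for sampling without replacement ensures $|\tilde q - q| \leq \nu$ except with probability $\exp(-\Omega(\tau \nu^2))$, again negligible in $N$. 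Conditioned on $q \leq \delta$ and $\tilde q \leq q + \nu$, the syndrome decoding of correction rate $(q + \nu)$ agreed upon in step~4 outputs the correct share of player $J$ with all but negligible probability, by the defining property recalled in footnote~\ref{footnote:err_corr}.

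Finally, once error correction produces the correct $v^J_{[M]}$, all honest parties hold shares satisfying $\bigoplus_{j\in[J]} v^j = 0$, and by linearity of $f_{cc}$ we get $\bigoplus_{j\in[J]} c^j = f_{cc}\!\left(\bigoplus_{j\in[J]} v^j\right) = 0$, so event (d) cannot occur. A union bound over (a)--(c) then gives an overall abort probability negligible in $N$. The main obstacle is step (c): linking the test QBER $q$ to the non-test error rate $\tilde q$ is not immediate, because the two statistics are jointly determined by the same random subset $\T$; the cleanest route is to invoke Serfling's bound (equivalently, a martingale argument over a random permutation of $[L]$) rather than attempting to treat them as independent samples of the same distribution.
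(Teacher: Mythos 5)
Your proposal is correct and follows essentially the same structure as the paper's proof: enumerate the abort events, bound each by a concentration inequality under the i.i.d.\ noise assumption, and conclude by a union bound, with the observation that $\tau=\Theta(\tau')=\omega(\log N)$ makes every tail term negligible. The one point where you genuinely diverge is your event (c): you relate the test error rate $q$ to the non-test error rate $\tilde q$ via Serfling's bound for sampling without replacement, whereas the paper simply applies Hoeffding a second time directly to the $M$ non-test rounds, using the fact that under i.i.d.\ depolarizing noise the error indicators off the test set are themselves independent Bernoulli($\mu$) variables (and independent of the choice of $\T$), so the Hamming weight of $\bigoplus_j v^j$ exceeds $(q+\nu)M$ only with probability $e^{-2M(q+\nu-\mu)^2}$. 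Your route is more conservative and would also survive in a non-i.i.d.\ or adversarial setting (where it is indeed the right tool, and is what the cited QKD analyses use for the \emph{security} argument), but for the correctness statement the independence assumption makes the paper's direct bound sufficient; conversely, the paper's version is marginally simpler but leans on the noise model. Your handling of (a) via \Cref{ass:CAC} and of (d) via linearity of $f_{cc}$ matches the paper, which treats the correctness check as passing deterministically once syndrome decoding has succeeded.
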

    
    \begin{proof}
        The protocol may abort at 3 stages, and we bound each of these probabilities below.
        
        First, the parties would abort during sifting if $\tau < \frac{\tau'}{4}$. As the player's basis choices are uniformly random, the probability for each round $n\in[N]$ that $ \bigoplus_{j\in[J]}b_{n}^j = 0$ is $\frac 12$. Hence, by Hoeffding's bound, except with probability at most $e^{-\frac {\tau'}8}$, $\tau \geq \frac {\tau'}4$. Similarly, except with independent probability at most $e^{-\frac{N-\tau'}8}$, $M \geq \frac {N-\tau'}4$. Both happen with probability at least $p_1 < e^{-\frac{\tau'}{8}} + e^{-\frac{N-\tau'}{8}}$.
        
        Secondly, the parties abort during error estimation if $q > \delta $. During the state distribution step in an ideal noiseless case, for all $n\in[N]$, player $J$ is expected to measure the state
            \begin{align*}
                \Z^{\big(\sum\limits_{j\in [J-1]} v_n^j - \frac 12 b_n^j\big)} \ket{+}
                &= \Z^{\bigoplus\limits_{j\in [J-1]} v_n^j} \Z^{-\frac 12\sum\limits_{j\in [J-1]} b_n^j} \ket{+}\\
                &=  \begin{cases}
                       \Z^{\big(\bigoplus\limits_{j\in [J-1]} v_n^j\big)\oplus\big(\frac 12(\sum\limits_{j\in [J-1]} b_n^j \mod 4)\big)} \ket{+} &\quad\textit{if } \bigoplus\limits_{j\in[J-1]}b_n^j = 0\\
                       \Z^{\big(\bigoplus\limits_{j\in [J-1]} v_n^j\big) \oplus\big(\frac 12(1+\sum\limits_{j\in [J-1]} b_n^j \mod 4)\big)} \ket{+_i} &\quad\textit{if } \bigoplus\limits_{j\in[J-1]}b_n^j = 1\\
                    \end{cases} 
            \end{align*}
            where the first equality comes from the facts that $\Z^2 = I$  and $Z^{a+b} = Z^aZ^b$.
            
            As a consequence, for all $n\in[N]$ where player $J$ chooses the basis $b_n^J = \bigoplus\limits_{j\in[J-1]}b_n^j$ for his measurement, they should in principle obtain the following result deterministically: 
            \begin{equation} \label{eq:correctness_aux1}
                v_n^J = \Big(\bigoplus\limits_{j\in [J-1]} v_n^j\Big) \oplus\Big(\frac 12\big(b_n^J + \sum\limits_{j\in [J-1]} b_n^j \mod 4\big)\Big)
            \end{equation}

            Thus, because of the assumption on the noise and using Hoeffding's bound, the qubit error rate
            \begin{align*}
                q &= \frac{1}{\tau} 
                    \Big|\Big\{n \in \T :
                        \sum\limits_{j \in [J]} ( 2v^j_{n} + b^j_{n} ) = 2 \mod 4
                    \Big\}\Big| \\
                &= \frac{1}{\tau} 
                    \Big|\Big\{n \in \T :
                        \bigoplus\limits_{j \in [J]} b^j_{n} = 0 
                        \textit{ and }
                        2 v_{n}^j = 2 - \sum\limits_{j \in [J-1]} 2v^j_{n} - \sum\limits_{j \in [J]} b^j_{n} \mod4
                    \Big\}\Big| \\
                &= \frac{1}{\tau} 
                    \Big|\Big\{n \in \T :
                        \bigoplus\limits_{j \in [J]} b^j_{n} = 0 
                        \textit{ and }
                        v_{n}^j = 
                        1 
                        -
                        \big( \sum\limits_{j \in [J-1]} v^j_{n} \big) 
                        - 
                        \frac 12 \big( \sum\limits_{j \in [J]} b^j_{n} \mod4 \big)
                        \mod2
                    \Big\}\Big| \\
                &= \frac{1}{\tau} 
                    \Big|\Big\{n \in \T :
                        \bigoplus\limits_{j \in [J]} b^j_{n} = 0 
                        \textit{ and }
                        v_{n}^j \ne 
                        \big( \bigoplus\limits_{j \in [J-1]} v^j_{n} \big) 
                        \oplus 
                        \frac 12 \big( \sum\limits_{j \in [J]} b^j_{n} \mod4 \big)
                    \Big\}\Big|\\
                &= \frac{1}{\tau} 
                    \Big|\Big\{n \in \T :
                        b_{n}^J = \bigoplus\limits_{j\in[J-1]}b_{n}^j
                        \textit{ and }
                        \text{Equation \ref{eq:correctness_aux1} is invalidated}
                    \Big\}\Big|
            \end{align*} 
            is smaller than or equal to the threshold $\delta$, except with probability at most $p_2=e^{-2\tau(\delta-\mu)^2}$.
        
        Finally, the parties may abort at the correctness check. Note that after error estimation, for all $n\in[M]$, $b_n^J = \bigoplus\limits_{j\in[J-1]}b_n^j$ and thus \Cref{eq:correctness_aux1} is satisfied. After Player $J$ updated their value at the beginning of the error correction step, \Cref{eq:correctness_aux1} gives 
            \begin{equation} \label{eq:correctness_aux2}
                \bigoplus\limits_{j\in [J]} v_n^j = 0.
            \end{equation}
         Again by Hoeffding's bound and from the assumed bound on the noise, the Hamming weight of $\bigoplus\limits_{j\in [J]} v_n^j$ will be smaller than or equal to $q+\nu$ except with probability at most $p_3 = e^{-2M(q+\nu-\mu)^2}$. In this event, after the error correction step, due to the properties of the error correcting code (see \Cref{footnote:err_corr}), except with a negligible probability in $N$ that we denote $p_{ec}$, \Cref{eq:correctness_aux2} will strictly be satisfied for all rounds $n \in [M]$. Hence the correctness check will pass and the protocol will successfully terminate.

\medskip 

        To conclude, by the union bound, the protocol successfully terminates except with probability at most $p_1+p_2+p_3+p_{ec}$ which, taking into account the different minimum values of $\tau$, $q$ and $M$ under the assumed events, is lower than $e^{-\frac{1}{8}\tau'} + e^{-\frac{1}{8}(N-\tau')} + e^{-\frac {(\delta-\mu)^2}2\tau'} + e^{-\frac {(\nu-(\delta-\mu))^2}2(N-\tau')} +p_{ec}$ which is negligible in $N$ since $\tau'=\omega(log(N))$.
    \end{proof}

    \Cref{prop:correctness2} and \Cref{prop:correctness} together show the correctness of the protocol.

\section{Security} \label{sec:security}
    This section is dedicated to the proof of security of the protocol described in \Cref{sec:protocol}. We first introduce an entanglement-based version of the protocol in \Cref{sec:EB_state_distr} followed by formal definitions in \Cref{sec:def}. We establish the equivalence between the entanglement-based and the prepare-and-measure versions in \Cref{sec:PM=EB}, and then show the security of the entanglement-based version in \Cref{sec:EB_security}. Finally, we bring together the results to conclude the security proof in \Cref{sec:conclusion-security}.

    \subsection{Entanglement-based version of the protocol} \label{sec:EB_state_distr}

        The entanglement-based version of the protocol
        is identical to the prepare-and-measure version from \Cref{sec:protocol}, except for the state distribution step which is defined below. \footnote{The entanglement-based version requires the nodes of the Qline to have different capabilities compared to the prepare-and-measure version. As we only use the entanglement-based version as a tool to show the security of the prepare-and-measure version, this has no impact on the implementation requirements for the protocol.}

        \begin{tcolorbox}[title=State distribution (entanglement-based version)]
        
                \begin{enumerate}[noitemsep,topsep=0pt]
                    \item The players agree on a integer $N$. Each player $j\in[J]$ samples $N$ random bits $(b_n^j)_{n \in [N]}$
                    \item Player $1$ generates N copies of the state $\frac{1}{\sqrt{2}}(\ket{00}+\ket{11})$ and sends one qubit of each copy to player $2$.
                    \item Each player $j\in\{2, ..., J-1\}$, obtains $N$ qubits. 
                        For each qubit, player $j$ applies a CNOT gate with the latter qubit as the control qubit of the operation, and a freshly prepared qubit in the $\ket{0}$ state as the target qubit. Player $j$ then sends the first qubit to player $j+1$
                    \item Each player $j\in [J]$ measures each of their qubits $(\phi_{n}^j)_{n\in [N]}$ in either the Hadamard basis if $b_n^j = 0$ or in the circular basis if $b_n^j = 1$. The classical outcome of these measurements are denoted $(v_n^j)_{n\in [N]}$%
                \end{enumerate}
        \end{tcolorbox}
    \subsection{Definitions} \label{sec:def}

        We hereafter define the systems that are later used to prove the security of the protocol. This includes the systems $\QL_{EB}$ and $\QL_{PM}$ respectively implementing the entanglement-based and the prepare-and-measure versions of the protocol. We first define individual components in \Cref{sec:def_components}, and then the complete systems in \Cref{sec:def_systems}.%

        \subsubsection{Component systems}\label{sec:def_components}
            \begin{itemize}
                \item $\C$ is a $J$-player classical authenticated broadcast channel with a random subset broadcast procedure implementing \Cref{ass:CAC}.
                It provides each player, honest or dishonest, with the ability to broadcast messages to all the others while authenticating the source of the messages. This is modeled by $J$ player interfaces, each with an input $t^j$ for $j\in[J]$ and an output $t$ giving the transcript and the source of all the messages broadcast in the inputs of the other player interfaces. 
                $\C$ provides external entities with the ability to read the data or block the communications, but not to tamper with them. This is modeled by an outer interface providing as output no more than a copy of $t$, and receiving a binary input $\ell$, called a blocking lever, which, if set to '$1$', prevents the messages to pass through.
                $\C$ also provides the players with the ability to execute the \emph{random subset broadcast} and \emph{distributed coin-flip} procedures described \Cref{sec:discussion}.

                \item For each honest player $j\in\Ho$, $\SD_{PM}^j$ (respectively $\SD_{EB}^j$) is a system implementing the state distribution step for player $j$ of the prepare-and-measure version (respectively the entanglement-based version) of the protocol. It has an outer interface with a $N$-qubit state input $\rho^j_{in}$ and a $N$-qubits state output $\rho^j_{out}$, as well as an inner interface outputting the bits $b_{[N]}^j$ and $v_{[N]}^j$. 

                \item The post-processing system $\PP$ implements the post-processing step of the protocol described in \Cref{sec:protocol} (identical for any player $j\in\Ho$ and any version of the protocol). It has an inner interface with two $N$-bits inputs for $b_{[N]}^j$ and $v_{[N]}^j$, a player interface with a final share output $s^j$, as well as a side interface managing all the communications that occur on the classical authenticated channel. This side interface is designed to be plugged to a player interface of $\C$ and thus has an output $t^j$ for outgoing messages and an input $t$ for incoming classical communication.

                \item The systems $\P^j_{PM}$ (resp. $\P^j_{EB}$) implement the full prepare-and-measure (resp. entanglement-based) protocol described in \Cref{sec:protocol,sec:EB_state_distr} for a given honest player $j\in\Ho$. $\P^j_{PM}$ ($\P^j_{EB}$) is the sequential composition of the state distribution and the post-processing systems at their respective inner interfaces. It thus has the outer interface of $\SD_{PM}^j$ ($\SD_{EB}^j$) as well as the player and side interfaces of $\PP$. 
                \begin{align}
                    \label{eq:Pj_PM_def}
                    \P^j_{PM} &= \PP\circ\SD_{PM}^j \\
                    \label{eq:Pj_EB_def}
                    \P^j_{EB} &= \PP\circ\SD_{EB}^j
                \end{align}

            \end{itemize}
        \subsubsection{Complete systems} \label{sec:def_systems}
        We define here the main systems that describe the protocol and its security. These systems are represented \Cref{fig:Isigma}
            \begin{itemize}

                \item The system $\QL_{PM}$ (resp. $\QL_{EB}$) is a theoretical model of the Qline for the prepare-and-measure version (resp. entanglement-based version) of the protocol. It is composed of the systems $\P^\Ho_{PM}$ ($\P^\Ho_{EB}$) modeling the $H$ honest players, all composed sequentially to the classical channel $\C$. $\QL_{PM}$ ($\QL_{EB}$) has an outer interface composed of $\rho^\Ho_{in}$ and $\rho^\Ho_{out}$ the outer inputs and outputs of all honest players, as well as the unused signals of $\C$, namely the dishonest players inputs and outputs ($(t^j)_{j \in [J]\bs\Ho}$ and $J-H$ copies of $t$) and the blocking lever $\ell$. $\QL_{PM}$ ($\QL_{EB}$) also has a share interface with $s^\Ho$ the share outputs of the player interfaces of the honest player systems $\P^\Ho_{PM}$ ($\P^\Ho_{EB}$).
                
                The dishonest players are fully controlled by the outside environment and thus are not part of the system. Instead, all their inputs and outputs are exposed to the outer interface of $\QL_{PM}$ ($\QL_{EB}$), modeling the fact that the outside world has complete control over the inputs and absolute knowledge of the outputs. 

                According to these definitions, the systems $\QL_{PM}$ and $\QL_{EB}$ can equivalently be viewed as
                \begin{align} \label{eq:Ql_sys_def}
                    \QL_{EB} &= \C \circ ({\big|\big|}_{j\in\Ho} P^j_{EB})\\
                    \QL_{PM} &= \C \circ ({\big|\big|}_{j\in\Ho} P^j_{PM})
                \end{align}
                
                \item The ideal secret sharing system $\I$ has a \emph{share} interface and an \emph{outer} interface. The share interface has share outputs $(s^j)_{j\in\Ho}$, which are either binary strings of equal sizes, or the abort symbol $\bot$. The ideal property of $\I$ is captured by the fact that the only inputs and outputs of the outer interface, namely those exposed to the external entities, are the following:
                \begin{itemize}
                    \item A binary input $\ell$ called a a \emph{blocking lever}, which, if set to $1$, forces the system to abort regardless of any other input (enforcing $s^j=\bot$ for all $j \in [J]$).
                    \item An output $|s|$ giving the size of the honest shares outputs.
                    \item A "compromised" share $s^{compr}$ which is a binary string \emph{input} of size $|s|$. This models the fact that $\I$ handles dishonest entities: they are allowed to choose their share, and their sum (bit-wise xor $\oplus$), called $s^{compr}$, is taken into account in the honest shares' generation.
                \end{itemize}
                $\I$ guarantees that if $\ell$ is set to $0$, the shares are all uniformly random and independent of anything else, except for the last share $s^{j_H}$ which is given by 
                \begin{equation} \label{eq:sH}
                s^{j_H} = \Big(\bigoplus\limits_{j \in\Ho\bs\{{j_H}\}} s^j \Big) \oplus s^{compr}
                \end{equation}
                
                \item The system $\SIM$ is a simulator. It has an inner interface meant to connect to the outer interface of $\I$, involving the blocking lever $\ell$, the size $|s|$ of the honest shares, as well as the compromised share input $s^{compr}$.
                $\SIM$ also has an outer interface that matches the one of the $\QL_{EB}$ system. This interface consists of the following inputs and outputs:
                $%
                    \rho^\Ho_{in},
                    \rho^\Ho_{out}, 
                    t^{[J]\bs\Ho},
                    t, \ell
                $.
            
                The simulator is represented in \Cref{fig:Isigma}.
                In order to produce inputs and outputs of the outer interface, the simulator internally runs a copy of the $\QL_{EB}$ system and directly maps every input and output of it's outer interface to the one of $\SIM$. The share outputs of $\QL_{EB}$  however, labeled $s^\Ho_\SIM$, are used to compute the compromised share output of $\SIM$ as
                \begin{equation} \label{eq:scompr}
                    s^{compr} = \bigoplus\limits_{j\in \Ho} s^j_\SIM          
                \end{equation}

                Furthermore, in the event where $\QL_{EB}$ aborts (indicated by the output shares being set to $\bot$), the simulator will trigger the blocking lever $\ell$ of $\I$.
    
            \end{itemize}

        \begin{figure}[!ht]
            \centering \hspace*{0.1cm}
            \includegraphics[scale=0.365]{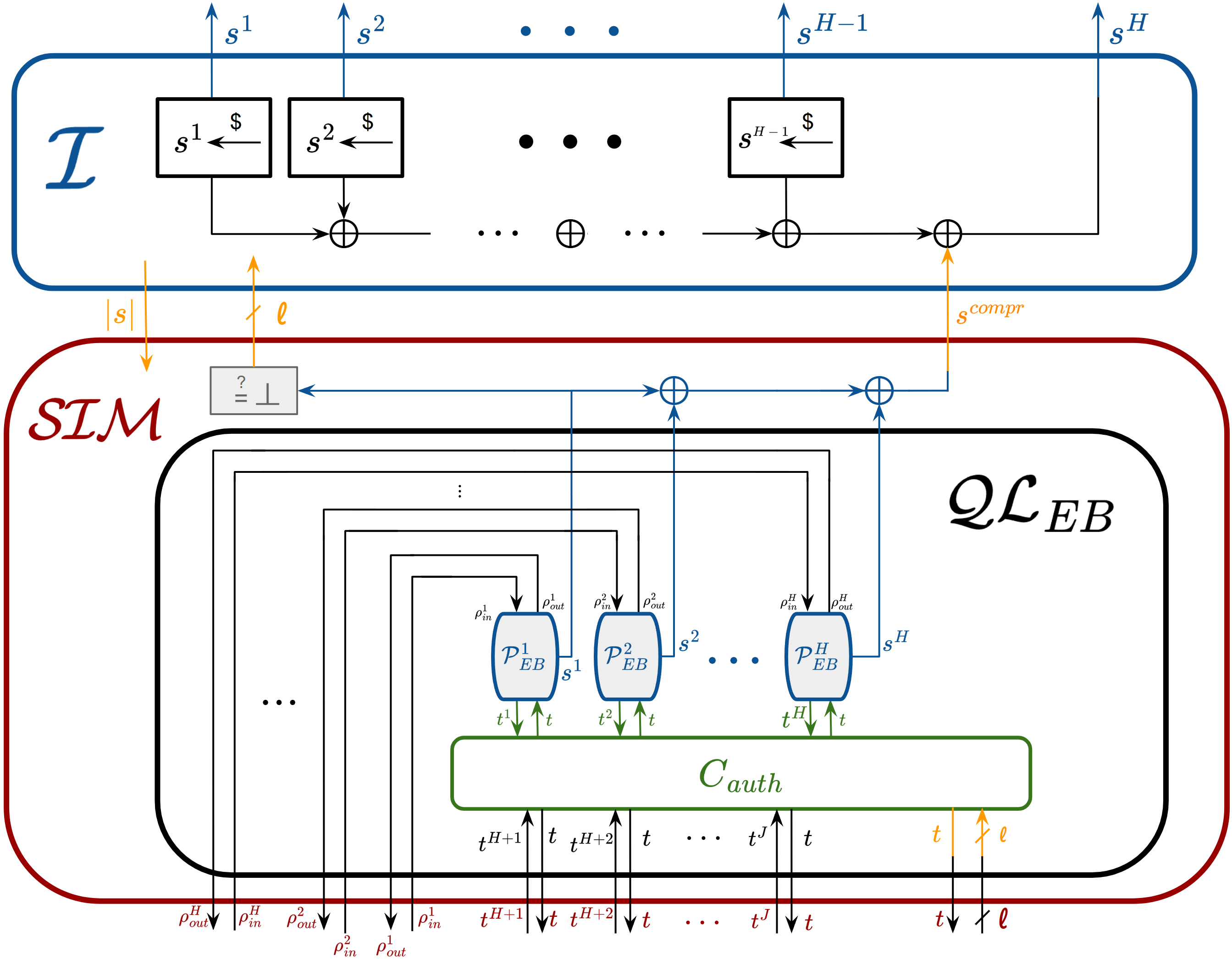}
            \caption{\centering The simulator $\SIM$ plugged on $\I$.}
            \label{fig:Isigma}
        \end{figure}

        \subsection{Equivalence between $\QL_{EB}$ and $\QL_{PM}$} \label{sec:PM=EB}
            This section is dedicated to prove the equivalence of the entanglement-based version of the protocol implemented by the system $\QL_{EB}$ and the prepare-and-measure version of the protocol implemented by $\QL_{PM}$).
            
            \begin{theorem} \label{thm:EB=PM}
                Under Assumptions \ref{ass:Urandom} and \ref{ass:Sealed_labs}, %
                $\QL_{PM} \approx_0 \QL_{EB}$.
            \end{theorem}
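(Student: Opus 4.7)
The plan is to show that for each honest player $j \in \Ho$, the CPTP map implemented on the flying qubit by the EB state-distribution step, together with the classical outputs $(b^j_n, v^j_n)$ it produces, is identical to the one implemented in PM. Since the downstream post-processing $\PP$ and the classical channel $\C$ are shared between both compositions, the equivalence $\QL_{PM} \approx_0 \QL_{EB}$ will follow immediately.

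Carrying this out, for honest player~$1$ I would expand the EPR pair $\frac{1}{\sqrt 2}(\ket{00}+\ket{11})$ in the Hadamard and circular bases to verify that measuring one half in basis $b^1_n$ yields $v^1_n$ uniformly at random and leaves the outgoing half in the state $\Z^{b^1_n/2 + v^1_n}\ket{+}$, exactly matching PM. For an intermediate honest player $j$, I would track an arbitrary incoming state $\alpha\ket{0}+\beta\ket{1}$ through the CNOT with fresh $\ket{0}$ ancilla, giving $\alpha\ket{00}+\beta\ket{11}$, and verify by direct calculation that measuring the ancilla in basis $b^j_n$ produces $v^j_n$ uniformly and collapses the outgoing qubit to $\Z^{b^j_n/2 + v^j_n}(\alpha\ket{0}+\beta\ket{1})$, which coincides with the PM action. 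The circular case requires a small bit of care because of the sign convention $\ket{-_i}\mapsto 0$, $\ket{+_i}\mapsto 1$ from \Cref{sec:quantum-information}. For player~$J$ (if honest) the step is literally identical in both versions. \Cref{ass:Urandom} is what permits identifying the classical uniform bits sampled in PM with the intrinsically uniform outcomes of the quantum measurements in EB.

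Next, I would observe that in EB the honest ancilla measurement is physically deferred: it is performed at step~4, after the outgoing qubit has already left the lab and possibly traversed the dishonest parts of the network. Because the ancilla sits in an honest lab and, by \Cref{ass:Sealed_labs}, never interacts with anything outside it after the CNOT, its state is uncorrelated with everything that happens between the CNOT and the measurement. Hence the measurement commutes with every subsequent operation, whether by other honest players, by the dishonest environment, or by the post-processing, and it can equivalently be performed immediately after the local CNOT (or immediately after EPR preparation for player~$1$). At that moment, the per-player CPTP equivalence of the previous paragraph applies round by round, rewriting $\QL_{EB}$ into $\QL_{PM}$ without altering any marginal at any interface.

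Putting the pieces together, at every interface of the two systems the joint classical-quantum distribution is identical for every behavior of the environment and the dishonest players, so the distinguishing advantage is exactly $0$. The main (and, honestly, only) obstacle I anticipate is the algebraic verification for intermediate honest players; it is a short but sign-sensitive computation in the Hadamard and circular bases, and the rest of the argument is composability bookkeeping in the Abstract Cryptography framework.
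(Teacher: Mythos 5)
Your proposal is correct and takes essentially the same route as the paper: the paper likewise reduces the theorem to the per-honest-player equivalence $\SD_{PM}^j \approx_0 \SD_{EB}^j$ (then composes with $\PP$ and $\C$), treats players $1$ and $J$ directly, and for intermediate players performs exactly the CNOT-plus-ancilla-measurement computation in the Hadamard and circular bases that you describe. The only presentational difference is that the paper writes the input as a purified state $\alpha\ket{\psi_0}_D\ket{0}_P+\beta\ket{\psi_1}_D\ket{1}_P$ with the purifying register held by the distinguisher, which simultaneously covers entangled inputs and the deferred-measurement commutation that you argue separately via \Cref{ass:Sealed_labs}.
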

            
            \begin{proof}
            
Let us suppose that for any honest player $j\in\Ho$,
                \begin{equation} \label{eq:SD_PM = SD_EB}
                    \SD_{PM}^j \approx_0 \SD_{EB}^j.
                \end{equation}

           Due to the fact that the systems $\QL_{PM}$ and $\QL_{EB}$ are obtained by an identical construction, based on respectively $\SD_{PM}^\Ho$ and $\SD_{EB}^\Ho$ (see \Cref{eq:Pj_PM_def,eq:Pj_EB_def,eq:Ql_sys_def}), the theorem follows from composing onto \Cref{eq:SD_PM = SD_EB} the systems $\PP$ and $\C$. The remainder of the proof is devoted to prove \Cref{eq:SD_PM = SD_EB} for any given $j\in\Ho$.
            
            For any $j\in\Ho$, $\SD_{PM}^j$ and $\SD_{EB}^j$ have the same interfaces, inputs and outputs.
            We first notice the following:
            \begin{itemize}
                \item The actions of $\SD_{PM}^J$ and $\SD_{EB}^J$ are the same.
                \item Both $\SD_{PM}^1$ and $\SD_{EB}^1$ output random bits $(b_n^1)_{n\in[N]}$ and the quantum state $\bigotimes_{n\in[N]}\frac1{\sqrt2}( \ket{0} + i^{b_n^1}\ket{1})$
            \end{itemize}
            Hence, $\SD_{PM}^J \approx_0 \SD_{EB}^J$ and $\SD_{PM}^1 \approx_0 \SD_{EB}^1$. We now deal with the case where $1<j<J$.
                
                Consider a distinguisher which is given black-box access to a system $\S \in \{\SD_{PM}^j, \SD_{EB}^j\}$ and whose goal is to distinguish the two cases. Let $n\in[N]$ be a fixed round and  
                $P$ be the register of the single-qubit system input of player $j$ at round $n$ (i.e $\rho_{n, in}^j$).
                Without loss of generality, we can consider that the distinguisher holds a register $D$ that contains a purification of $P$.
                The state of the whole system can be written as \begin{equation}
                \ket{\psi^{in}}_{DP} = \alpha \ket{\psi_0}_D\ket{0}_P + \beta \ket{\psi_1}_D\ket{1}_P
                \end{equation}
                with $\alpha, \beta\in\mathds{C}$.
                If $\S$ is $\SD_{PM}^j$, then player $j$ picks uniformly random $b_n^j$ and $v_n^j$, and applies the unitary $Z^{v_n^j - \frac{b_n^j}{2}}$ to system $P$. The resulting state is described by
                \begin{equation}\label{eq:PM_psiout}
                    \ket{\psi^{out}_{PM(b_n^j,v_n^j)}}_{DP} = \alpha \ket{\psi_0}_D\ket{0}_{P} + i^{2v_n^j - b_n^j}\beta \ket{\psi_1}_D\ket{1}_{P}
                \end{equation}
                
                We show that this state is indistinguishable from the one obtained when $S$ is $\SD_{EB}^j$. In this case, a CNOT gate is applied using register $P$ as control and the target register, call it $Q$, contains a fresh qubit in the state $\ket{0}_Q$. The overall state after this operation is described by:
                \begin{align*}
                    &\ket{\psi^{out}_{EB}}_{DPQ} \\
                    &=
                    (\mathds{I}_D \otimes CNOT_{PQ})(\ket{\psi^{in}}_{DP} \otimes \ket{0}_Q)\\
                    &= \alpha \ket{\psi_0}_D\ket{0}_{P}\ket{0}_Q + \beta \ket{\psi_1}_D\ket{1}_{P}\ket{1}_Q\\
                    \numberthis \label{eq:EB_psiout_0}
                    &= \frac1{\sqrt2}\Big[\Big( \alpha\ket{\psi_0}_D\ket{0}_{P} + \beta \ket{\psi_1}_D\ket{1}_{P} \Big) \ket{+}_Q +\Big( \alpha\ket{\psi_0}_D\ket{0}_{P} - \beta \ket{\psi_1}_D\ket{1}_{P}\Big) \ket{-}_Q\Big]\\
                    \numberthis \label{eq:EB_psiout_1}
                    &= \frac1{\sqrt2}\Big[\Big( \alpha\ket{\psi_0}_D\ket{0}_{P} - i\beta \ket{\psi_1}_D\ket{1}_{P} \Big) \ket{+_i}_Q + \Big( \alpha\ket{\psi_0}_D\ket{0}_{P} + i\beta \ket{\psi_1}_D\ket{1}_{P}\Big) \ket{-_i}_Q\Big]
                \end{align*}
                The register $Q$ is then measured in either the Hadamard or the circular basis, depending on the uniformly random bit $b_n^j$.
                From \Cref{eq:EB_psiout_0} and \Cref{eq:EB_psiout_1}, one can see that for all $b_n^j$ and $\ket{\psi^{in}}_{DP}$, 
                the outcome of the measurement is an uniformly random bit $v^j_n$. Moreover, the post-measurement state on registers $D$ and $P$ is exactly $\ket{\psi^{out}_{PM(b_n^j,v_n^j)}}_{DP}$%

                In conclusion, the outputs
                of round $n$ of $\SD_{PM}^j$ and $\SD_{EB}^j$ are exactly the same, and therefore indistinguishable, which proves \Cref{eq:SD_PM = SD_EB} for any $j\in\Ho$.
          \end{proof}

        \subsection{Security of $\QL_{EB}$} \label{sec:EB_security} %
\label{sec:security-eb}
      
            In this section, we prove the security of the entanglement based version of the protocol described in \Cref{sec:EB_state_distr}.
            
            \begin{theorem} \label{thm:EB_security}
                Under Assumptions~\ref{ass:2_pl_honest},~\ref{ass:Urandom},~\ref{ass:Sealed_labs}, and~\ref{ass:CAC},
                \begin{equation} \label{eq:eb_indist}
                    \QL_{EB}
                    \approx_{\eps} 
                    \I\circ\SIM
                \end{equation}
                where $\eps = (H-1)\eps_\QKD + \eps_{cor} $ with $\eps_\QKD$ the distinguishing advantage of a $QKD$ protocol with the same assumptions, and $\eps_{cor} = 2^{-\eta}$ the correctness parameter defined in \Cref{sec:correctness}
            \end{theorem}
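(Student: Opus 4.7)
My plan is a hybrid argument that transitions from $\QL_{EB}$ to $\I\circ\SIM$ in $H$ steps: $H-1$ applications of two-party QKD security and one final step that uses correctness. The starting observation is that, by construction of $\SIM$ (which internally runs a copy of $\QL_{EB}$ and forwards every outer-interface signal), the outer interfaces of $\QL_{EB}$ and $\I\circ\SIM$ are identically distributed. Hence all distinguishing power must come from the joint distribution of the shares on the share interface, conditional on the transcript of the outer interface.

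\textbf{Hybrid chain.} Fix an ordering $j_1, \ldots, j_{H-1}, j_H$ of $\Ho$, singling out $j_H$ as a pivot, and define hybrids $G_0 := \QL_{EB}, G_1, \ldots, G_{H-1}, G_H := \I\circ\SIM$. For $1 \le i \le H-1$, $G_i$ is obtained from $G_{i-1}$ by replacing the share of player $j_i$ with a fresh uniformly random independent string and simultaneously updating the pivot share by XOR-ing in the difference, so that the global XOR $\bigoplus_{j\in\Ho} s^j$ is preserved across the transition. After $H-1$ steps, the shares of $j_1, \ldots, j_{H-1}$ are uniform and independent, while the pivot is determined by them together with the XOR of the real protocol's honest shares.

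\textbf{Reducing each step to QKD.} Consecutive hybrids $G_{i-1}$ and $G_i$ differ only in whether $j_i$'s share is the actual protocol output or a uniform string (with the pivot absorbing the change). I would reduce the distinguishing advantage between them to the $\eps_\QKD$-security of a two-party QKD protocol on the Qline between $j_i$ and $j_H$, by folding all other players---the dishonest players plus the remaining honest ones, whose randomness, announcements and shares are fully revealed to the reduction---into the QKD adversary. The restricted two-party protocol on the Qline implements the standard QKD pipeline: sifting, parameter estimation via random-subset broadcast (\Cref{ass:CAC}), linear-syndrome error correction with a $2$-universal correctness check, and linear $2$-universal privacy amplification, under assumptions inherited from \Cref{ass:Urandom,ass:Sealed_labs,ass:CAC}. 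Summing the $H-1$ bounds yields $G_0 \approx_{(H-1)\eps_\QKD} G_{H-1}$.

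\textbf{Closing with correctness and main obstacle.} The final step $G_{H-1} \to G_H = \I\circ\SIM$ compares two definitions of the pivot share: in $G_{H-1}$ the pivot is set so that the honest XOR equals the XOR of the shares produced by the real $\QL_{EB}$, while in $\I\circ\SIM$ it is set so that the honest XOR equals $s^{compr} = \bigoplus_{j\in\Ho} s^j_\SIM$ computed by $\SIM$. Both XOR values equal---on a successful termination---the XOR of the dishonest players' effective shares determined by the common outer-interface transcript, thanks to the correctness check verifying $\bigoplus_j f_{cc}(v^j) = 0$, which by \Cref{prop:correctness} enforces $\bigoplus_j s^j = 0$ except with probability $\eps_{cor}$. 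This gives $G_{H-1} \approx_{\eps_{cor}} G_H$, which chained with the QKD bounds yields $\eps = (H-1)\eps_\QKD + \eps_{cor}$. The main technical obstacle I anticipate is the QKD reduction itself: one must (a) faithfully simulate the merged honest players inside the QKD adversary---propagating their Qline operations, classical announcements, and syndrome leakage without creating extra leakage---and (b) verify that the random-subset broadcast of \Cref{ass:CAC} plays the role of QKD's sample-and-test procedure in a composable way.
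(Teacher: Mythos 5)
Your high-level architecture --- a hybrid over the honest shares with a pivot $j_H$, each step charged to two-party QKD security on the Qline, plus a correctness argument tying the pivot to the dishonest contributions --- is the same as the paper's (\Cref{cor:eb_equiv_abort}, \Cref{lem:indist_without_j_star}, \Cref{lem:eb_equiv_success}). Two points, however, are genuine gaps rather than deferred details.

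First, the reduction you invoke is to ``the $\eps_\QKD$-security of a two-party QKD protocol,'' but the two-party protocol that $j_i$ and $j_H$ actually execute inside $\QL_{EB}$ is \emph{not} standard EB-QKD, in three ways: (i) the sifting condition is $b_n^{j_i} = b_n^{j_H}\oplus b_n^{(\D)}$ and the value relation is twisted by $\frac12(\sum_j b_n^j \bmod 4)$, i.e.\ Bob's effective basis and outcome are functions of adversarially influenced bits contributed by the other players; (ii) the test subset $\T'$ is committed before sifting rather than sampled afterwards; and (iii) \emph{both} parties' syndromes are effectively announced (every player $j\in[J-1]$ broadcasts $w^j$, and player $J$'s raw key is constrained by the correctness check), whereas standard QKD leaks only Alice's syndrome --- leaking $w(K_B)$ on top of $w(K_A)$ is equivalent to leaking the syndrome of the error string, and showing this is harmless requires a separate argument (the paper's \Cref{prop:QKD'_sec}, proved via the syndrome-leakage literature). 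You flag (i) and (ii) as anticipated obstacles without resolving them, and you do not identify (iii) at all; without these the step $G_{i-1}\approx_{\eps_\QKD}G_i$ is not justified by any off-the-shelf QKD theorem. This is precisely why the paper introduces the modified protocol $\QKD$ and proves its security separately before running the hybrid.

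Second, your hybrids keep the pivot share $s^{j_H}$ visible and have it ``absorb the difference'' at each step. But $s^{j_H}$ \emph{is} Bob's key in the embedded QKD instance, and the QKD security statement traces Bob's key out ($\QKD\circ\M_{Bob}$ in the paper); the reduction therefore cannot produce the pivot share it owes the distinguisher unless it first rewrites $s^{j_H}$ as $\bigoplus_{j\ne j_H}s^{j}$ --- which is exactly the correctness event and fails with probability $\eps_{cor}$. So correctness must be conditioned on \emph{before} (or inside each step of) the hybrid chain, not only at the last transition where you place it; indeed your last transition $G_{H-1}\to\I\circ\SIM$ is an exact equality once correctness holds, since $\SIM$'s internal run is the real run. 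The paper sidesteps this by masking $s^{j_H}$ throughout the hybrid (the system $\M_H$, matching $\M_{Bob}$) and paying $\eps_{cor}$ once at the end to remove the mask. The total bound $(H-1)\eps_\QKD+\eps_{cor}$ survives either bookkeeping, but as written your intermediate reduction step is not simulatable.
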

    
            In order to prove \Cref{thm:EB_security}, we consider an unbounded distinguisher $\D$ which is given a system $\S$, either equal to $\QL_{EB}$ or to $\I\circ\SIM$, uniformly at random. 
            To avoid any ambiguities, we denote the input and outputs of $\QL_{EB}$ by 
            $
                \rho^\Ho_{in},
                t^{[J]\bs\Ho},
                \ell
            $
            and
            $
                s^\Ho,
                \rho^\Ho_{out}, 
                t %
            $
            , the input and outputs of $\I\circ\SIM$ by 
            $
                \tilde \rho^\Ho_{in},
                \tilde t^{[J]\bs\Ho},
                \tilde \ell
            $
            and
            $
                \tilde s^\Ho,
                \tilde \rho^\Ho_{out}, 
                \tilde t %
            $
            , and the input and outputs of $\S$ by 
            $
                \rho^\Ho_{\S, in},
                t^{[J]\bs\Ho}_\S,
                \ell_\S
            $
            and
            $
                s^\Ho_\S,
                \rho^\Ho_{\S, out}, 
                t_\S %
            $
            \footnote{This notation consists simply of \emph{labels} for the inputs and outputs of the abstract systems. They do not denote the underlying quantum states (that could be potentially entangled).}.

            \begin{lemma} \label{cor:eb_equiv_abort}
                The probability for $\S$ to abort is the same regardless of whether $\S$ is $\QL_{EB}$ or $\I\circ\SIM$.
                Moreover, conditioned on the event that $\S$ aborts, then
$                    \QL_{EB}
                    \approx_0
                    \I\circ\SIM $
            \end{lemma}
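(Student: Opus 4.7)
The plan is to exploit the fact that $\SIM$ is a ``faithful'' simulator: by construction it internally instantiates a copy of $\QL_{EB}$ and routes every outer-interface signal to and from that internal copy, only deviating by (i) reading the internal shares $s^\Ho_\SIM$ to compute $s^{compr}$ for $\I$ via \Cref{eq:scompr} and (ii) triggering the blocking lever $\ell$ of $\I$ whenever the internal copy aborts. Consequently the whole argument reduces to tracking the internal $\QL_{EB}$ inside $\SIM$ and checking how $\I$ reacts to the compromised share and the lever.

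First, I would set up a coupling. For any distinguisher $\D$ interacting with $\S = \I\circ\SIM$, the view received by the internal $\QL_{EB}$ is exactly the view a stand-alone $\QL_{EB}$ would receive under the same $\D$, because $\SIM$ acts as a transparent relay on its outer interface. In particular, the random variable ``the internal copy aborts'' has the same distribution as ``$\QL_{EB}$ aborts'' in the real experiment, which already establishes the first assertion of the lemma.

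Next, call this common event $\mathsf{Abort}$. On $\mathsf{Abort}$, $\SIM$ sets $\ell = 1$ on $\I$, so by the definition of $\I$ all honest outputs $\tilde s^j$ equal $\bot$; in the real experiment $\QL_{EB}$ already outputs $s^j = \bot$ for every $j\in\Ho$ by definition of an abort. Hence the share outputs of the two systems agree identically. Moreover the outer-interface signals $(\rho^\Ho_{out}, t, \rho^\Ho_{in}, t^{[J]\bs\Ho}, \ell)$ agree bit-for-bit because they are either inputs chosen by $\D$ (forwarded verbatim) or outputs of the same internal $\QL_{EB}$ execution. This yields $\QL_{EB} \approx_0 \I\circ\SIM$ conditioned on $\mathsf{Abort}$.

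The main obstacle is really only bookkeeping: one must check that any adaptive query of $\D$ is perfectly forwarded by $\SIM$ so that the coupling is preserved step-by-step, and that the abort event is observable from the outer interface (in particular from the transcript $t$ produced by $\C$) so that $\SIM$ can deterministically detect it and trigger $\ell$. Both of these are immediate consequences of the construction of $\SIM$ given in \Cref{sec:def_systems}, so no probabilistic slack or hybrid argument is required at this stage --- the indistinguishability on the abort branch is exact.
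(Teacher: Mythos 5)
Your proposal is correct and follows essentially the same argument as the paper: both rest on the observation that $\SIM$ internally runs a faithful copy of $\QL_{EB}$ and transparently relays all outer-interface signals, so the abort event has the same distribution in both systems, and on abort all shares are $\bot$ while the remaining outputs come from identically distributed $\QL_{EB}$ executions. Your extra remarks on adaptive forwarding and the observability of the abort are just a more explicit spelling-out of what the paper leaves implicit.
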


            \begin{proof}
                Note that in both $\QL_{EB}$ and $\I\circ\SIM$, 
                all the inputs are given to a $\QL_{EB}$ system, either directly in the real experiment or forwarded by $\SIM$ in the simulated one. This $\QL_{EB}$ system thus produces outputs with the exact same distribution in both cases. This gives the following:
                \begin{itemize}
                    \item Since $\I\circ\SIM$ aborts if and only if its simulation of $\QL_{EB}$ aborts, the first part of the lemma trivially holds.
                    \item If the systems abort, all the outputs of $\QL_{EB}$ and $\I\circ\SIM$ follow the exact same distribution. The shares $s^\Ho$ and $\tilde s^\Ho$ are indeed all set to $\bot$, while the other outputs directly come from the $\QL_{EB}$ system. This gives the second part of the lemma
                \end{itemize}
            \end{proof}

            \Cref{cor:eb_equiv_abort} allows us to focus on the case where the protocol does not abort. 

            Our goal is now to show that conditioned on $\S$ successfully terminating, the systems $\QL_{EB}$ and $\I\circ\SIM$ are indistinguishable.
            In order to prove this, we introduce an intermediate protocol that we call $\QKD$. The objective here is that $\QKD$ be close enough to standard entanglement-based QKD so that its security follows trivially from historical results, yet slightly modified so that we can reduce the security of our protocol to the security of $\QKD$

            In order to define $\QKD$, we first briefly recall entanglement-based QKD (EB-QKD).\footnote{We refer the reader to \cite{TL17} (part I) for a detailed description of EB-QKD.} 
            EB-QKD involves two players, Alice{}$_{QKD}$ and Bob{}$_{QKD}$ and works as follow:
            \begin{enumerate}
                \item \textbf{(State distribution)} Alice{}$_{QKD}$ and Bob{}$_{QKD}$ receive $N$ qubits and measure each qubit either in the computational or in the Hadamard basis, uniformly at random.
                \item \textbf{(sifting)} Alice{}$_{QKD}$ and Bob{}$_{QKD}$ announce their basis choices and discard the rounds where they did not agree.
                \item \textbf{(Error estimation)} Alice{}$_{QKD}$ and Bob{}$_{QKD}$ agree on a random subset of the remaining qubits and announce their measurement outcomes for the positions in that subset. They count the outcomes for which they disagree among this subset and abort the protocol if the corresponding rate is above a given threshold.
                \item \textbf{(Error correction)} Alice{}$_{QKD}$ announces the syndrome (relative to a given syndrome decoding protocol) of her secret and Bob{}$_{QKD}$ corrects his secret using this syndrome. They then compare hashes of the corrected secrets, aborting upon any mismatch.
                \item \textbf{(Privacy amplification)} Alice{}$_{QKD}$ and Bob{}$_{QKD}$ compute their final key as the image of their error-corrected measurement outcomes (the raw key) under a 2-universal hash function.
            \end{enumerate}
            
            \noindent We define the $\QKD$ protocol with the following modifications from EB-QKD.   %
            \begin{enumerate}[label=(\arabic*)]
                \item \label{item:diff_QKD'_1} In the state distribution step, for each position $n\in[N]$, Bob{}$_\QKD$ receives from the eavesdropper two bits $b_n^{(\D)}$ and $v_n^{(\D)}$. Instead of measuring in a basis given by $b_n^{\text{Bob}}$, he measures in basis $\hat b_n^{\text{Bob}}$ where
                \begin{equation} \label{eq:hatb_def}
                    \hat b_n^{\text{Bob}} = b_n^{\text{Bob}} \oplus b_n^{(\D)}
                \end{equation} 
                Additionally, Bob{}$_\QKD$ computes $\hat v_n^{\text{Bob}}$ where
                \begin{equation} \label{eq:hatv_def}
                    \hat v_n^{\text{Bob}} = v_n^{\text{Bob}} \oplus v_n^{(\D)} \oplus (b_n^{\text{Bob}} \vee b_n^{(\D)})
                \end{equation} 

                For the rest of the protocol, Bob{}$_\QKD$ uses $\hat b_n^{\text{Bob}}$ and $\hat v_n^{\text{Bob}}$ instead of $b_n^{\text{Bob}}$ and $v_n^{\text{Bob}}$.

                We notice that these two steps are equivalent to Bob applying  $\Z^{(v_n^{(\D)}+\frac12b_n^{(\D)})\pi}$ on his qubit before measuring in the $b^{\text{Bob}}_n$ basis\footnote{In the formula of $\hat v_n^{\text{Bob}}$ (\Cref{eq:hatv_def}), $v_n^{(\D)}$ flips the results according to the $\Z^{v_n^{(\D)}}$ component of the operation, and $(b_n^{j'} \vee b_n^{(\D)})$ accounts for the effect of the $\Z^{\frac12b_n^{(\D)}}$ part.}.

                \item \label{item:diff_QKD'_2} Before sifting, Alice{}$_\QKD$ and Bob{}$_\QKD$ agree on a random subset of all the qubits received in the state distribution step. At the error estimation step, this subset, restricted to the rounds that were not discarded during sifting, is used for the error rate computation instead of a freshly generated one.
                \item \label{item:diff_QKD'_3} In the error correction step, Bob{}$_{\QKD}$ does not correct his secret and instead announces its syndrome like Alice{}$_{\QKD}$ does.
            \end{enumerate}

            We claim now that Alice's key is as secure in $\QKD$ as in standard QKD. 
            To formalize this statement, we define the following "mask" systems, that replace or simply remove access for a given distinguisher to a specific output:
            \begin{itemize}
                \item $\M^\I_{Alice}$ takes as input the key of Alice{}$_\QKD$, and outputs instead a freshly generated, uniformly random bit string of the same length.
                \item $\M_{Bob}$ takes as input the key of Bob{}$_\QKD$, and has no output.
            \end{itemize} 
            Our claim amounts to the following \Cref{prop:QKD'_sec}, of which we defer the proof to \Cref{app:proof_QKD'sec}.

            \begin{proposition}\label{prop:QKD'_sec}
                \begin{equation}
                    \QKD\circ\M_{Bob} \approx_{\eps_{QKD}} \QKD\circ\M_{Bob}\circ\M^\I_{Alice}
                \end{equation}
                where $\eps_{QKD}$ is the distinguishing advantage of a $QKD$ protocol with the same parameters and under the same assumptions.
                
            \end{proposition}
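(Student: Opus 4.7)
The plan is to prove the proposition by a hybrid argument that transforms standard entanglement-based QKD (whose composable security with $\M_{Bob}$ in place is a classical result) into $\QKD$ through three intermediate protocols, one per modification \ref{item:diff_QKD'_1}--\ref{item:diff_QKD'_3}. For each hop I would bound the incremental distinguishing advantage, and the overall bound $\eps_{QKD}$ would follow by summing them and invoking composability.

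The first two hops should contribute zero distinguishing advantage. For \ref{item:diff_QKD'_1}, Bob{}$_\QKD$'s use of the adversary's bits $b_n^{(\D)}, v_n^{(\D)}$ to rerandomize his measurement is, as already observed in the main text, equivalent to the adversary pre-applying $\Z^{v_n^{(\D)}+\frac12 b_n^{(\D)}}$ to the qubit traveling to Bob and having him subsequently measure in basis $b_n^{\text{Bob}}$. Since the adversary already fully controls the quantum channel in standard QKD, this operation can be absorbed into its attack strategy, yielding identically distributed transcripts. For \ref{item:diff_QKD'_2}, the random subset for error estimation is sampled uniformly and independently of the basis announcements used in sifting, so moving its sampling before sifting merely permutes independent operations and yields the same joint distribution over all classical and quantum values.

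The main obstacle is modification \ref{item:diff_QKD'_3}, where Bob{}$_\QKD$ publishes the syndrome of his raw key $v^B$ rather than correcting it using Alice's syndrome. By linearity of the syndrome map, announcing $\mathrm{syn}(v^B)$ alongside $\mathrm{syn}(v^A)$ is equivalent to publishing $\mathrm{syn}(v^A\oplus v^B)$, i.e. the syndrome of the error pattern, so the additional classical leakage to Eve is bounded by the syndrome length. Under $\M_{Bob}$, Bob's final key is hidden from the distinguisher, so it suffices to preserve the indistinguishability of Alice's key from uniform; this is governed by the smooth min-entropy $H_{\min}^\eps(v^A \mid E, \mathrm{syn}(v^A))$ appearing in the privacy amplification step of standard QKD. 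Additionally conditioning on $\mathrm{syn}(v^B)$ reduces this min-entropy by at most the syndrome length via a chain-rule argument, and interpreting the reference QKD protocol as sharing the same error correction and privacy amplification parameters as $\QKD$ absorbs this loss into $\eps_{QKD}$. The subtlety to get right is to make this parameter correspondence precise, so that the reference protocol's privacy amplification margin already accommodates the doubled syndrome leakage in $\QKD$ and the distinguishing bound does not degrade.
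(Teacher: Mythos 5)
Your treatment of differences~\ref{item:diff_QKD'_1} and~\ref{item:diff_QKD'_2} matches the paper's: absorbing Bob's rerandomization into the adversary's channel attack, and noting that the early-sampled test subset is still uniform and independent of the secrets. The substantive divergence is in difference~\ref{item:diff_QKD'_3}. Both you and the paper use linearity to observe that announcing $\mathrm{syn}(v^B)$ on top of $\mathrm{syn}(v^A)$ is equivalent to leaking the syndrome of the error pattern $e = v^A \oplus v^B$. But from there the paper invokes a dedicated result on syndrome leakage (its reference \emph{syndrome\_leakage}) to conclude that, because $e$ is determined by the channel attack and the protocol uses no statistic beyond the qubit error rate, the leak of $w(e)$ reveals \emph{no additional} information about Alice's key --- i.e.\ zero loss, so $\eps_{QKD}$ is literally the advantage of a QKD protocol with identical parameters, consistent with the explicit formula in \Cref{sec:conclusion-security} which contains a single $\chi$ in the exponent. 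You instead apply a generic smooth min-entropy chain rule, paying up to $\chi$ additional bits, and propose to absorb this by redefining the reference protocol's privacy-amplification margin. That argument is sound and more elementary (it needs no external syndrome-leakage result), but it proves a quantitatively weaker statement: your $\eps_{QKD}$ is the advantage of a QKD protocol with $2\chi$ bits of error-correction leakage rather than $\chi$, which is not ``the same parameters'' as the proposition asserts and would propagate a worse bound into \Cref{thm:EB_security}. If you want to match the proposition as stated, you either need the independence argument for $w(e)$ (the paper's route) or you must explicitly restate the proposition and the downstream $\eps_\QKD$ formula with the doubled syndrome term.
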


            Similarly as above, we define the system $\M_{H}$ that takes as input the share $s^{j_H}$ and has no output.
            Using \Cref{prop:QKD'_sec}, we now prove \Cref{lem:indist_without_j_star}.

            \begin{lemma}\label{lem:indist_without_j_star}
                Let $\epsilon_\QKD$ be the distinguishing advantage of $\QKD$ when run with the same parameters as $\S$.
                Then, assuming $\S$ successfully terminates,

                \begin{equation} \label{eq:indist_without_j_star}
                    \QL_{EB} \circ \M_H
                    \approx_{(H-1) \eps_\QKD} 
                    \I\circ\SIM \circ \M_H
                \end{equation}
                
            \end{lemma}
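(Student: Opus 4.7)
The plan is to use a hybrid argument over the $H-1$ honest players other than $j_H$. Enumerate them as $j_1,\ldots,j_{H-1}$. I would define systems $\Pi_0,\Pi_1,\ldots,\Pi_{H-1}$, where $\Pi_0 = \QL_{EB}\circ\M_H$ and $\Pi_k$ is obtained from $\Pi_{k-1}$ by replacing the share output $s^{j_k}$ with a freshly sampled uniform bit string of length $|s|$, every other output being unchanged. The target would be $\Pi_{H-1}$, in which the $H-1$ non-masked honest shares are independent uniform strings while all remaining outputs come from an honest run of $\QL_{EB}$. I would then argue that $\Pi_{H-1}$ is identically distributed to $\I\circ\SIM\circ\M_H$: in the latter, $\I$ emits $H-1$ independent uniform honest shares, $s^{j_H}$ is killed by $\M_H$, and every other outer-interface signal is produced by the copy of $\QL_{EB}$ that $\SIM$ runs internally, which receives exactly the same inputs as the top-level $\QL_{EB}$ of $\Pi_{H-1}$. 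It would therefore suffice to bound the distinguishing advantage between consecutive hybrids by $\eps_\QKD$ and sum with the triangle inequality.

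For each step $\Pi_{k-1} \to \Pi_k$ I would reduce to \Cref{prop:QKD'_sec} by embedding $\QKD$ into the Qline with honest player $j_k$ playing the role of Alice$_\QKD$, and all the other players (honest and dishonest) together with the distinguisher playing the role of the $\QKD$ eavesdropper together with Bob$_\QKD$. The EPR half delivered to Alice$_\QKD$ would be identified with the qubit on which player $j_k$ performs its CNOT; the other EPR half would be routed through the remaining downstream players before reaching Bob$_\QKD$. The behaviour of the upstream players (CNOTs of earlier honest players and arbitrary operations of earlier dishonest players) would be folded into the eavesdropper's preparation of the pair, and the behaviour of the downstream players -- including their measurements and the classical messages they produce -- would be simulated internally by the eavesdropper, using the Qline protocol specification and the uniform shares already used for $s^{j_1},\ldots,s^{j_{k-1}}$. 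The random subset broadcast that fixes $\T'$ before sifting matches modification~\ref{item:diff_QKD'_2}, the fact that every player in $[J-1]$ broadcasts their syndrome matches modification~\ref{item:diff_QKD'_3}, and the eavesdropper-controlled bits $(b_n^{(\D)}, v_n^{(\D)})$ of modification~\ref{item:diff_QKD'_1} would model the aggregate $\Z^{x}$ rotation that dishonest parties can impose on Bob$_\QKD$'s half of the pair before measurement. Applying $\M_{Bob}$ on top would correspond to masking $s^{j_H}$, and toggling $\M^\I_{Alice}$ would toggle whether $s^{j_k}$ is its genuine value or a uniform string -- i.e.\ the difference between $\Pi_{k-1}$ and $\Pi_k$ -- so the gap at each step is at most $\eps_\QKD$.

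The hard part will be to justify the embedding rigorously: on the Qline the other honest and dishonest players' operations are interleaved with the CNOT extraction by $j_k$, whereas in $\QKD$ the eavesdropper prepares the EPR state in one shot and then Alice$_\QKD$ measures independently. One has to verify that moving all external operations into the preparation phase, and running all downstream measurements inside the eavesdropper, produces exactly the same joint distribution on the outer interface as the Qline itself. The $(b_n^{\text{Bob}}\vee b_n^{(\D)})$ correction in \Cref{eq:hatv_def} appears precisely because a $\Z^{1/2}$ rotation conjugates the Hadamard and circular bases, so an adversarial rotation of that form shifts Bob$_\QKD$'s measurement bit by an amount depending on both basis choices; tracking this carefully is what makes modification~\ref{item:diff_QKD'_1} the correct abstraction of the Qline attack surface. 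Once the embedding and the matching of side-data (syndromes, hash challenges, sampled subset, compromised-player share) are verified, each of the $H-1$ steps contributes at most $\eps_\QKD$, yielding the claimed $(H-1)\eps_\QKD$ distance.
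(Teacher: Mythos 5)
Your proposal is correct and follows essentially the same route as the paper: a hybrid argument over the $H-1$ honest players other than $j_H$, with each step bounded by $\eps_\QKD$ via a reduction that embeds $\QKD$ into the Qline (Alice$_\QKD$ as player $j_h$, Bob$_\QKD$ as $j_H$, all other parties folded into the eavesdropper whose aggregate effect is captured by $b_n^{(\D)}, v_n^{(\D)}$), so that toggling $\M^\I_{Alice}$ is exactly the difference between consecutive hybrids. The ``hard part'' you flag is precisely what the paper isolates as \Cref{prop:reduction} and verifies in the appendix, including the point that the random subset broadcast is what makes the dishonest announcements available to the reduction when it must compute $b_n^{(\D)}, v_n^{(\D)}$.
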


            \begin{proof}
            Without loss of generality, the output of the distinguisher (on whether $S$ is $\QL_{EB}$ or $\I\circ\SIM$) is the outcome of a given measurement on the state $\sigma_\D$ containing all the outputs of $\S$ and the private register of the distinguisher.
            In particular, for all honest players $j\in\Ho$, we denote $S^j$ the register corresponding to the share output $s^j$.
            We denote $\sigma_{D\bs S^{j_H}} = Tr_{S^{j_H}}(\sigma_{D})$.
            
            In the remainder of the proof, we will aim to show that for any $h\in[H-1]$,
            \begin{equation} \label{eq:single-share-eps_states}
               \left(\frac{1}{2^{K}}I\right)^{\otimes h-1}\otimes Tr_{S^{j_{[h-1]}}}( \sigma_{D\bs S^{j_H}} ) \approx_{\eps_\QKD} \left(\frac{1}{2^{K}}I\right)^{\otimes h}\otimes Tr_{S^{j_{[h]}}}( \sigma_{D\bs S^{j_H}} ).
            \end{equation}

We notice that this finishes the proof, since chaining \Cref{eq:single-share-eps_states} for each $h \in [H-1]$, we have
            \begin{equation} \label{eq:combined-shares_states}
                \sigma_{D\bs S^{j_H}}
                \approx_{(H-1)\eps_\QKD}
                \left(\frac{1}{2^{K}}I\right)^{\otimes H-1}\otimes Tr_{S^\Ho}(\sigma_{D}),
            \end{equation}
which exactly gives \Cref{eq:indist_without_j_star}.

To conclude the proof of \Cref{lem:indist_without_j_star}, we prove \Cref{eq:single-share-eps_states} by contradiction. Let us suppose that there exist $h\in[H-1]$ and a distinguisher $\D_\QL$ which distinguishes $(\frac{1}{2^{K}}I)^{\otimes h-1}\otimes Tr_{S^{j_{[h-1]}}}( \sigma_{D\bs S^{j_H}} )$ from $(\frac{1}{2^{K}}I)^{\otimes h}\otimes Tr_{S^{j_{[h]}}}( \sigma_{D\bs S^{j_H}} )$ with probability $\eps_\D>\eps_\QKD$.

            Using this assumption, we will design an attack on $\QKD$ to contradict \Cref{prop:QKD'_sec}, thus proving \Cref{eq:single-share-eps_states}.
            The idea is to build a Qline around Alice{}$_\QKD$ and Bob{}$_\QKD$ such that when they perform $\QKD$, they are in fact taking part (as players $j_h$ and $j_H$ respectively) in a $\QL_{EB}$ protocol that we can then attack with $\D_\QL$. Formally, we will define a system $\A$ represented in \Cref{fig:A} which interfaces between the $\QKD$ protocol of Alice{}$_\QKD$ and Bob{}$_\QKD$ at an \emph{inner} interface, and the distinguisher $\D_\QL$ at an \emph{outer} interface.
            \begin{figure}[!ht]
    \centering \hspace*{0cm}
    \includegraphics[scale=0.14]{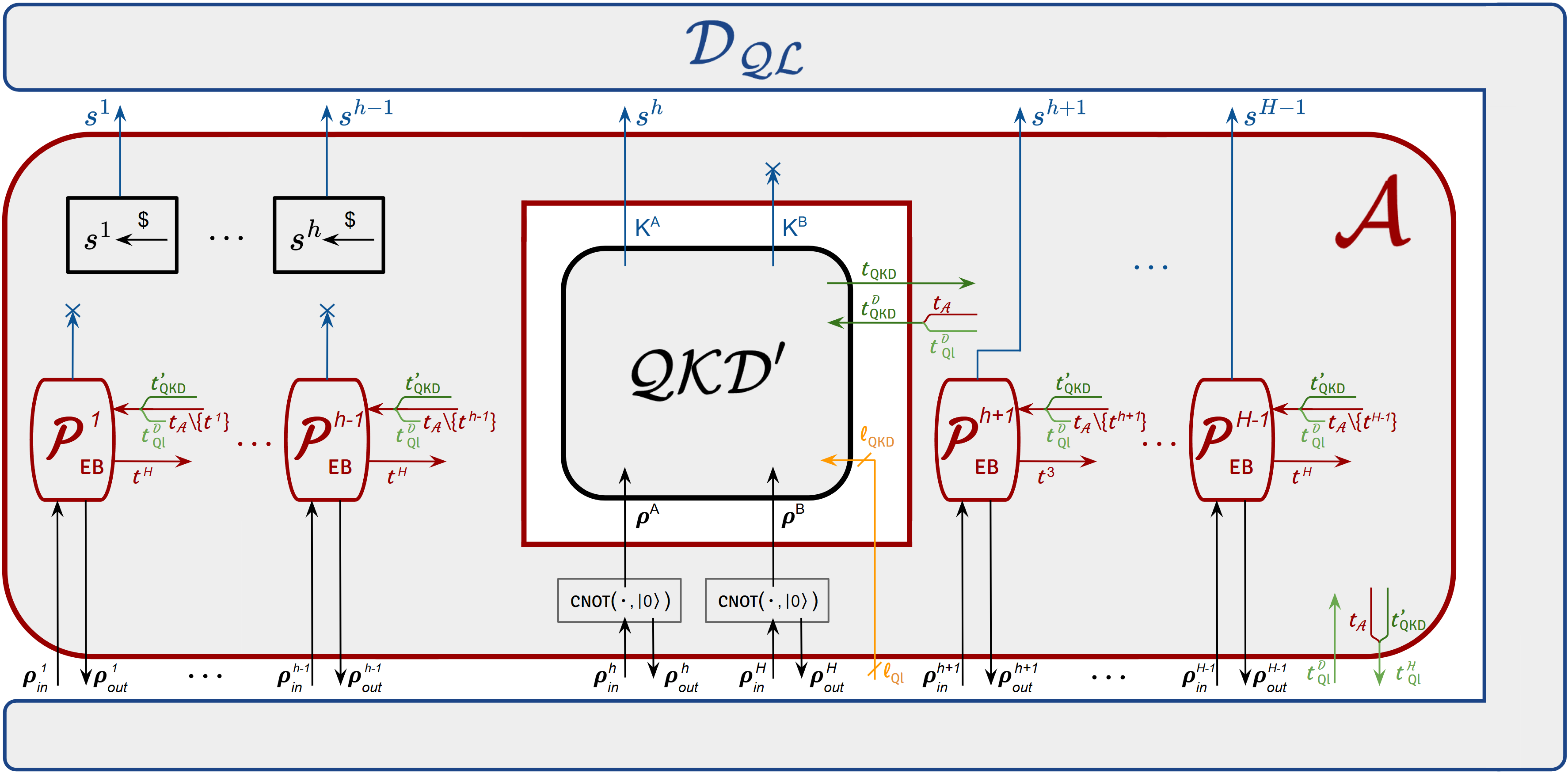}
    \caption{\centering The system $\A$ interfacing between $\QKD$ and $\D_\QL$. }
    \label{fig:A}
\end{figure}
            We define below the construction of $\A$ in details:\\
            $\A$ simulates the honest player systems $\P^{j}_{EB}$ of $\QL_{EB}$ for $j\in\Ho\bs\{j_h, j_H\}$.
                \begin{enumerate}
                    \item During the state distribution step:
                    \begin{itemize}
                        \item $\A$ directly forwards the quantum inputs/outputs $\rho_{in}^{j}$/$ \rho_{out}^{j}$ of $\P^{\Ho\bs\{j_h, j_H\}}_{EB}$ from/to the distinguisher at the outer interface of $\A$.
                        \item \label{item:A_def1} $\A$ interfaces the quantum inputs of Alice{}$_\QKD$ and Bob{}$_\QKD$ with the quantum inputs \emph{and outputs} of honest players $j_h$ and $j_H$ at its outer interface. Note that when performing $\QKD$, Alice{}$_\QKD$ and Bob{}$_\QKD$ exactly follow the state distribution step of $\QL_{EB}$ except for the step where the players propagate entanglement. $\A$ thus simulates this step by receiving at it's outer interface from the distinguisher $\D_\QL$ the qubits of input $\rho_{in}^{j_h}$, applying a CNOT gate on that qubit together with a freshly generated $\ket{0}$ and then sending one qubit to Alice{}$_\QKD$ and the second one to $\D_{\QL}$ via the outer output $\rho_{out}^{j_h}$ of $\A$. $\A$ performs similar steps for $j_H$ and Bob{}$_\QKD$.
                    \end{itemize}
                    \item During the post-processing step:
                    \begin{itemize}
                        \item $\A$ maps the blocking lever $l$ of its outer interface to the corresponding input of the authenticated channel of $\QKD$ 
                        \item The simulated honest player systems $\P^{\Ho\bs\{j_h, j_H\}}_{EB}$ comply with Alice{}$_\QKD$'s and Bob{}$_\QKD$'s choices for $\T'$, $\nu$, the syndrome decoding protocol, $f_{cc}$, $K$, and $f_{pa}$.
                        \item The inputs $b_{[N]}^{(\D)}$ and $v_{[N]}^{(\D)}$ of Bob{}$_\QKD$, provided by the eavesdropper in $\QKD$ protocol are defined by:
                        \begin{align*}
                            b_{n}^{(\D)} &= \bigoplus\limits_{j\in[J]\bs\{j_h, j_H\}} b_{n}^j\\
                            2v_{n}^{(\D)} + b_{n}^{(\D)} &= \sum\limits_{j\in[J]\bs\{j_h, j_H\}} 2v_{n}^j + b_{n}^j \mod4
                        \end{align*}
                        where the values $b_{[N]}^{[J]\bs\{j_h, j_H\}}$ and $v_{[N]}^{[J]\bs\{j_h, j_H\}}$ are defined either by the outputs of the simulated player systems $\P^j_{EB}$ for honest players $j\in\Ho\bs\{j_h, j_H\}$, or by $\D_\QL$ in the announcements of the dishonest players for $j\in[J]\bs\Ho$. Note that these dishonest announcements $v_{[N]}^{[J]\bs\Ho}$ are all well defined and accessible to $\A$ at this moment because of the random subset broadcast of $\QL_{EB}$ (see \Cref{ass:CAC}) which guarantees that the distinguisher $\D_\QL$ does not use the honest players' announcements to produces the dishonest players' ones. In other words, as $\A$ simulates the whole Qline which contains the authenticated channel with random subset broadcast, it has control over that channel and can compute some announcements based on others.
                        \item The simulated honest player systems $\P^{\Ho\bs\{j_h, j_H\}}_{EB}$ receive:
                        \begin{itemize}
                            \item The announcements of the other simulated honest players.
                            \item The announcements of the dishonest players that come from the distinguisher at the outer interface of $\A$.
                            \item The announcements of Alice{}$_\QKD$ labeled as coming from $j_h$.
                            \item The announcements of Bob{}$_\QKD$, labeled as coming from $j_H$ and slightly modified in order for the announcements that depend on $\hat v_n^{\text{Bob}}$ to instead match with what they would have been if Bob was using $v_n^{\text{Bob}}$ all along. This amounts to computing the modification string 
                            \begin{equation} \label{eq:vA_def}
                                v_n^{\A} = \hat v_n^{\text{Bob}} \oplus v_n^{\text{Bob}} = v_n^{(\D)} \oplus (b_n^{\text{Bob}} \vee b_n^{(\D)})
                            \end{equation} 
                            for all $n\in[N]$ (see \Cref{eq:hatv_def}), and to XOR the corresponding check bits announcement $v_\T^{\A}$, syndrome $w^{\A}$, correctness check output $c^\A$ and final share $s^{\A}$ to the respective announcements of Bob{}$_\QKD$.
                        \end{itemize}
                        \item $\A$ sends to $\D_\QL$ at it's outer interface the following announcements:
                        \begin{itemize}
                            \item The announcements of the simulated honest player systems $\P^{\Ho\bs\{j_h, j_H\}}_{EB}$.
                            \item The announcements of Alice{}$_\QKD$ labeled as coming from $j_h$.
                            \item The announcements of Bob{}$_\QKD$, labeled as coming from $j_H$ and modified the exact same way as described above.
                            
                        \end{itemize}
                        \item $\A$ exposes the following share outputs at it's outer interface:
                        \begin{itemize}
                            \item The shares $s^{j_{[h-1]}}$ are all sampled uniformly at random.
                            \item The share $s^{j_h}$ is directly the final key output of Alice{}$_\QKD$.
                            \item The shares $s^{j_{\{h+1,...,H-1\}}}$ are directly the output shares of the simulated honest players $\P^{j_{\{h+1,...,H-1\}}}_{EB}$.
                        \end{itemize}

                    \end{itemize}

                \end{enumerate}
                
                By connecting $\A$ onto $\D_\QL$ and $\QKD$ as depicted in \Cref{fig:A}, Alice{}$_\QKD$ and Bob{}$_\QKD$ are performing $\QKD$, while in the same time $\QKD \circ \A$ is undergoing the $\QL_{EB}$ secret sharing protocol.
                To obtain the contradiction, we use the following \Cref{prop:reduction} which we prove in \Cref{app:proof_QLD}. 
                
                \begin{proposition} \label{prop:reduction}
                    The system $\A\circ\D_\QL$ distinguishes $\QKD\circ\M_{Bob}$ from $\QKD\circ\M_{Bob}\circ\M^\I_{Alice}$ with probability $\eps_\D$.
                \end{proposition}
                
                As  $\eps_\D > \eps_\QKD$, this contradicts \Cref{prop:QKD'_sec}, hence proving \Cref{eq:single-share-eps_states}.
        \end{proof}

            \begin{lemma} \label{lem:eb_equiv_success}
                If $\S$ successfully terminates, then
                \begin{equation} \label{eq:main_indist}
                    \QL_{EB}
                    \approx_\eps
                    \I\circ\SIM
                \end{equation}
                with $
                    \eps = \eps_{cor} + (H-1) \eps_\QKD
                $
                where $\eps_\QKD$ is the distinguishing advantage of Alice's key in a $\QKD$ protocol with the same parameters and under the same assumptions.
            \end{lemma}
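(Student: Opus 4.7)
The plan is to combine \Cref{cor:eb_equiv_abort}, \Cref{lem:indist_without_j_star}, and the correctness guarantee of \Cref{prop:correctness} to reinstate the share $s^{j_H}$ that was masked out in \Cref{lem:indist_without_j_star}.

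First I would apply \Cref{cor:eb_equiv_abort} to dispose of the abort case, on which the two systems are exactly equal; it thus suffices to bound the distinguishing advantage on the event of successful termination. On this event, \Cref{lem:indist_without_j_star} directly yields that $\QL_{EB} \circ \M_H$ and $\I \circ \SIM \circ \M_H$ are $(H-1)\eps_\QKD$-close, i.e., the two systems are already indistinguishable up to $(H-1)\eps_\QKD$ when the share of the distinguished honest player $j_H$ is hidden. The remaining task is then to ``put back'' $s^{j_H}$ at an additional cost of at most $\eps_{cor}$.

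The central observation is that, in both systems, $s^{j_H}$ can be expressed---up to an error controlled by correctness---as the same deterministic function of the outputs that remain visible after masking. In $\I \circ \SIM$, \Cref{eq:sH} gives $s^{j_H} = \bigoplus_{j \in \Ho \setminus \{j_H\}} s^j \oplus s^{compr}$ exactly, where $s^{compr} = \bigoplus_{j \in \Ho} s^j_\SIM$ is computed by the simulator from its internal run of $\QL_{EB}$; applying \Cref{prop:correctness} to this internal run, $s^{compr}$ coincides with the adversary's transcript-determined implicit share $\bigoplus_{j \notin \Ho} s^j_\SIM$ except with probability $\eps_{cor}$. Symmetrically, in $\QL_{EB}$, \Cref{prop:correctness} yields $\bigoplus_{j \in [J]} s^j = 0$ on successful termination with probability at least $1 - \eps_{cor}$, so $s^{j_H}$ equals $\bigoplus_{j \in \Ho \setminus \{j_H\}} s^j$ XORed with the same transcript-determined adversary share. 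A triangle inequality combining \Cref{lem:indist_without_j_star} with these two reconstructions then yields the target bound $(H-1)\eps_\QKD + \eps_{cor}$.

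The main obstacle will be ensuring that the correctness-induced error is counted only once rather than twice: a naive triangle argument would give $2\eps_{cor}$, since both the real $\QL_{EB}$ and the simulator's internal copy each contribute their own $\eps_{cor}$ from \Cref{prop:correctness}. To collapse these into a single $\eps_{cor}$, I would couple the protocol randomness of the two copies of $\QL_{EB}$ so that their correctness events are tied to a single underlying probabilistic event, while carefully verifying that this coupling remains valid with respect to the independent uniform sampling of $(s^j)_{j \in \Ho \setminus \{j_H\}}$ performed inside $\I$.
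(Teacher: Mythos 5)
Your proposal follows essentially the same route as the paper: mask $s^{j_H}$, invoke \Cref{lem:indist_without_j_star}, and then reinstate $s^{j_H}$ by observing that, up to the correctness event of \Cref{prop:correctness}, it is the same deterministic function of the remaining visible outputs in both worlds (via \Cref{eq:sH} and $s^{compr}=\bigoplus_{j\in\Ho}s^j_\SIM$ on the ideal side, via $\bigoplus_{j\in[J]}s^j=0$ on the real side). The one place where you overcomplicate matters is the final paragraph: there are never two copies of $\QL_{EB}$ running in the same experiment that would need to be coupled --- the distinguisher faces either the real $\QL_{EB}$ or the simulator's internal one, so the correctness failure is a single bad event attached to the unique $\QL_{EB}$ subsystem of $\S$, and it has the same probability in both worlds because that subsystem receives identically distributed inputs (the same observation already used in \Cref{cor:eb_equiv_abort}). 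A standard difference-lemma/conditioning argument on this one event then yields the single $\eps_{cor}$ directly, which is exactly how the paper accounts for it; no coupling with the uniform sampling inside $\I$ is required.
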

\begin{proof}
We notice that following the description of the protocol, $s^{[J]}$ is a function of the values $v^{[J]}_{[N]}$ of the players along with publicly known data. As these values are all input in the random subset broadcast functionality of $\S$, they are all well defined in $\S$ and thus so are in particular the dishonest player's expected shares that we denote $s_\S^{[J]\bs\Ho}$.

We assume that correctness (\Cref{eq:correctness}) holds for the $\QL_{EB}$ system of $\S$. From \Cref{prop:correctness}, this happens except with probability $\eps_{cor}$.

If $\S$ is $\I\circ\SIM$, by definition of $\I$ and $\SIM$ (see \Cref{eq:sH,eq:scompr}), we have
\begin{align*}
     s^{j_H}
     & = \Big(\bigoplus_{j \in \Ho\bs\{j_H\}} \tilde s^{j}\Big) \oplus s^{compr}\\
     &= \Big(\bigoplus_{j \in \Ho\bs\{j_H\}} \tilde s^{j}\Big) \oplus \Big(\bigoplus\limits_{j\in \Ho} s^j_\SIM\Big)
\end{align*}
which as we assume correctness gives 
\begin{equation*}
     s^{j_H} = \Big(\bigoplus_{j \in \Ho\bs\{j_H\}} \tilde s^{j}\Big) \oplus \Big(\bigoplus\limits_{j\in [J]\bs\Ho} s^j_\S\Big)
\end{equation*}

Similarly, if $\S$ is $\QL_{EB}$, by correctness,
\begin{align*}
     s^{j_H} &= \bigoplus_{j \in [J]\bs\{j_H\}} s^{j}\\
     &\Big(\bigoplus_{j \in \Ho\bs\{j_H\}} s^{j}\Big) \oplus \Big(\bigoplus\limits_{j\in [J]\bs\Ho} s^j_\S\Big)
\end{align*}

In conclusion, in both cases, $s^{j_H}$ is identically defined from the other inputs and outputs of $\S$ as
\begin{equation*}
     s^{j_H} = \Big(\bigoplus_{j \in \Ho\bs\{j_H\}} s_\S^{j}\Big) \oplus \Big(\bigoplus\limits_{j\in [J]\bs\Ho} s^j_\S\Big)
\end{equation*}

As a result, except with probability $\eps_{cor}$, \Cref{lem:indist_without_j_star} holds even when giving the distinguisher access to $s^{j_H}$, meaning even when removing the $\M_H$ systems from \Cref{eq:indist_without_j_star}. This concludes the proof  
\end{proof}

            \Cref{thm:EB_security} then follows from \Cref{cor:eb_equiv_abort} and \Cref{lem:eb_equiv_success}. 
            
        \subsection{Conclusion on the security}
\label{sec:conclusion-security}
        Combining the main results of the previous sections (\Cref{thm:EB_security} and \Cref{thm:EB=PM}) using the composability of the security definition, we obtain the following final statement.

        \begin{theorem}\label{thm:PM_security}
        Under Assumptions~\ref{ass:2_pl_honest},~\ref{ass:Urandom},~\ref{ass:Sealed_labs} and~\ref{ass:CAC}, the following holds. 
        \begin{equation}
            \QL_{PM} \approx_\eps \I \circ \SIM
        \end{equation}
        with 
        \begin{equation}
            \eps = \eps_{cor} + (H-1) \eps_{QKD}
        \end{equation}
        \end{theorem}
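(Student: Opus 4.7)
The plan is to derive this statement as an almost immediate corollary of the two main technical results already established in the preceding subsections, combined using the triangle inequality property of the distinguishing pseudo-metric (equivalently, the composability of $\eps$-security in the Abstract Cryptography framework).

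First I would recall \Cref{thm:EB=PM}, which under Assumptions~\ref{ass:Urandom} and~\ref{ass:Sealed_labs} gives $\QL_{PM} \approx_0 \QL_{EB}$, meaning the prepare-and-measure and entanglement-based implementations are perfectly indistinguishable. Next I would invoke \Cref{thm:EB_security}, which under Assumptions~\ref{ass:2_pl_honest},~\ref{ass:Urandom},~\ref{ass:Sealed_labs}, and~\ref{ass:CAC} yields $\QL_{EB} \approx_\eps \I\circ\SIM$ with $\eps = \eps_{cor} + (H-1)\eps_{QKD}$, using precisely the same simulator $\SIM$ defined in \Cref{sec:def_systems}. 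Since both cited theorems are invoked under exactly the hypotheses listed in \Cref{thm:PM_security}, all the assumption bookkeeping lines up without further work.

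To conclude, I would apply the triangle inequality for the distinguishing pseudo-metric (which underlies the composability theorem of~\cite{AbstractCrypto} mentioned in \Cref{sec:AC_presentation}): for any three systems $A$, $B$, $C$ with $A \approx_{\eps_1} B$ and $B \approx_{\eps_2} C$, one has $A \approx_{\eps_1+\eps_2} C$. Applying this with $A = \QL_{PM}$, $B = \QL_{EB}$, and $C = \I\circ\SIM$, together with $\eps_1 = 0$ and $\eps_2 = \eps_{cor} + (H-1)\eps_{QKD}$, yields
\begin{equation*}
    \QL_{PM} \approx_{\eps_{cor} + (H-1)\eps_{QKD}} \I\circ\SIM,
\end{equation*}
which is exactly the claimed statement.

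There is essentially no technical obstacle at this stage: all the heavy lifting has been done in \Cref{thm:EB=PM} (the quantum-information argument showing that the CNOT-based entanglement distribution is statistically indistinguishable from the prepare-and-measure branch) and in \Cref{thm:EB_security} (the reduction to the security of $\QKD$ via the interfacing system $\A$). The only thing to be mindful of when writing the proof is to state explicitly that the same simulator $\SIM$ is used on both sides, and that composability of the distinguishing pseudo-metric is what licenses adding the two distances; no new simulator needs to be constructed for the prepare-and-measure case.
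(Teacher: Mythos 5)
Your proposal is correct and matches the paper's own argument exactly: the paper likewise obtains \Cref{thm:PM_security} by combining \Cref{thm:EB=PM} ($\QL_{PM} \approx_0 \QL_{EB}$) with \Cref{thm:EB_security} ($\QL_{EB} \approx_\eps \I\circ\SIM$) via the composability (triangle inequality) of the distinguishing pseudo-metric. No gaps; your added remark about reusing the same simulator $\SIM$ is a sensible clarification.
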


        An example for an expression of $\eps_{QKD}$ can be obtained from \cite{TL17} (Theorem 3, Equations (57) and (58)), yielding the following, where $\chi$ is the size of the syndromes involved in the error correction step and $h(x)= -xlog(x) - (1-x)log(1-x)$ is the binary entropy function.
        \begin{equation} %
            \eps_\QKD = %
            2e^{-\frac{M\tau^2}{L(\tau+1)}\nu^2} + \frac12 \sqrt{2^{-M\big(1-h(\delta + \nu)\big) + \eta + \chi + K }}
        \end{equation}

        For two honest players, our bound for $\eps$  matches known bounds for standard QKD. This is natural since the security proof follows from a reduction to QKD.
        However, the bound worsens when the number of honest players grows. This counter-intuitive behavior could be explained by the fact that more honest players imply more potential targets (shares) to distinguish from uniformly random ones. The composability of our security definition indeed imposes, in order for the whole scheme to be secure, that all shares are secure together, and not only individually.

\newpage
\bibliographystyle{ieeetr}
\bibliography{MyRef.bib}

\begin{thebibliography}{10}

\bibitem{S79}
A.~Shamir, ``How to share a secret,'' {\em Communications of the ACM}, vol.~22,
  no.~11, pp.~612--613, 1979.

\bibitem{B79}
G.~R. Blakley, ``Safeguarding cryptographic keys,'' {\em 1979 International
  Workshop on Managing Requirements Knowledge (MARK)}, pp.~313--318, 1899.

\bibitem{BB84}
C.~H. Bennett and G.~Brassard, ``Quantum cryptography: Public key distribution
  and coin tossing,'' {\em Theoretical Computer Science}, vol.~560, pp.~7--11,
  2014.
\newblock Theoretical Aspects of Quantum Cryptography – celebrating 30 years
  of BB84.

\bibitem{TL17}
M.~Tomamichel and A.~Leverrier, ``A largely self-contained and complete
  security proof for quantum key distribution,'' {\em Quantum}, vol.~1, p.~14,
  2017.

\bibitem{Portman_Renner14}
C.~Portmann and R.~Renner, ``Cryptographic security of quantum key
  distribution,'' 2014.

\bibitem{BF12}
N.~J. Bouman and S.~Fehr, ``Sampling in a quantum population, and
  applications,'' in {\em Annual Cryptology Conference}, pp.~724--741,
  Springer, 2010.

\bibitem{doosti-hanouz23establishing}
M.~Doosti, L.~Hanouz, A.~Marin, E.~Kashefi, and M.~Kaplan, ``Establishing
  shared secret keys on quantum line networks: protocol and security,'' 2023.

\bibitem{CPEW17}
M.~Clementi, A.~Pappa, A.~Eckstein, I.~A. Walmsley, E.~Kashefi, and S.~Barz,
  ``Classical multiparty computation using quantum resources,'' {\em Phys. Rev.
  A}, vol.~96, no.~062317, 2017.

\bibitem{polacchi2023multi}
B.~Polacchi, D.~Leichtle, L.~Limongi, G.~Carvacho, G.~Milani, N.~Spagnolo,
  M.~Kaplan, F.~Sciarrino, and E.~Kashefi, ``Multi-client distributed blind
  quantum computation with the qline architecture,'' {\em Nature
  Communications}, vol.~14, no.~1, p.~7743, 2023.

\bibitem{old_SSqline_experimental}
C.~Schmid, P.~Trojek, M.~Bourennane, C.~Kurtsiefer, M.~\ifmmode~\dot{Z}\else
  \.{Z}\fi{}ukowski, and H.~Weinfurter, ``Experimental single qubit quantum
  secret sharing,'' {\em Phys. Rev. Lett.}, vol.~95, p.~230505, Dec 2005.

\bibitem{Differential-phase-shift}
K.~Inoue, T.~Ohashi, T.~Kukita, K.~Watanabe, S.~Hayashi, T.~Honjo, and
  H.~Takesue, ``Differential-phase-shift quantum secret sharing,'' {\em Opt.
  Express}, vol.~16, pp.~15469--15476, Sep 2008.

\bibitem{AbstractCrypto}
U.~Maurer and R.~Renner, ``Abstract cryptography,'' in {\em Proceedings of
  Innovations in Computer Science, ICS 2010, Tsinghua University Press},
  pp.~1--21, 2011.

\bibitem{dining_cryptographers}
D.~Chaum, ``The dining cryptographers problem: Unconditional sender and
  recipient untraceability,'' {\em J. Cryptology}, vol.~1, pp.~65--75, 1988.

\bibitem{BT07}
A.~Broadbent and A.~Tapp, ``Information-theoretic security without an honest
  majority,'' in {\em International Conference on the Theory and Application of
  Cryptology and Information Security}, pp.~410--426, Springer, 2007.

\bibitem{nielsen&chuang}
M.~Nielsen and I.~Chuang, {\em Quantum Computation and Quantum Information}.
\newblock Cambridge Series on Information and the Natural Sciences, Cambridge
  University Press, 2000.

\bibitem{rennerabstract}
C.~Portmann and R.~Renner, ``Cryptographic security of quantum key
  distribution,'' Tech. Rep. arXiv:1409.3525, arXiv preprint, 2016.

\bibitem{J90}
J.~Rompel, ``One-way functions are necessary and sufficient for secure
  signatures,'' in {\em Proceedings of the Twenty-Second Annual ACM Symposium
  on Theory of Computing}, STOC '90, (New York, NY, USA), p.~387–394,
  Association for Computing Machinery, 1990.

\bibitem{WegmanCarter81}
M.~N. Wegman and J.~L. Carter, ``New hash functions and their use in
  authentication and set equality,'' {\em Journal of computer and system
  sciences}, vol.~22, no.~3, pp.~265--279, 1981.

\bibitem{SPHINCS+}
D.~Bernstein, D.~Hopwood, A.~Hülsing, T.~Lange, R.~Niederhagen,
  L.~Papachristodoulou, M.~Schneider, P.~Schwabe, and Z.~Wilcox-O’Hearn,
  ``Sphincs: Practical stateless hash-based signatures,'' vol.~9056,
  pp.~368--397, 04 2015.

\bibitem{NTRU}
J.~Hoffstein, J.~Pipher, and J.~H. Silverman, ``Ntru: A ring-based public key
  cryptosystem.,'' in {\em ANTS}, vol.~1423 of {\em Lecture Notes in Computer
  Science}, pp.~267--288, Springer, 1998.

\bibitem{distr_coin_flip}
A.~Beimel, E.~Omri, and I.~Orlov, ``Protocols for multiparty coin toss with a
  dishonest majority,'' {\em Journal of Cryptology}, vol.~28, no.~3,
  pp.~551--600, 2015.

\bibitem{Naor}
M.~Naor, ``Bit commitment using pseudo-randomness,'' in {\em Advances in
  Cryptology - CRYPTO '89, 9th Annual International Cryptology Conference,
  Santa Barbara, California, USA, August 20-24, 1989, Proceedings}, vol.~435 of
  {\em Lecture Notes in Computer Science}, pp.~128--136, Springer, 1989.

\bibitem{C24sponge}
J.~Carolan and A.~Poremba, ``Quantum one-wayness of the single-round sponge
  with invertible permutations,'' 2024.

\bibitem{WC79}
J.~L. Carter and M.~N. Wegman, ``Universal classes of hash functions,'' {\em
  Journal of Computer and System Sciences}, vol.~18, pp.~143--154, 1979.

\bibitem{syndrome_leakage}
D.~Tupkary and N.~L\"utkenhaus, ``Using cascade in quantum key distribution,''
  {\em Phys. Rev. Appl.}, vol.~20, p.~064040, Dec 2023.

\end{thebibliography}

\newpage
\appendix

\section{Notation} \label{sec:notations}

\begin{table}[!h]
\begin{tabularx}{\textwidth}{p{0.14\textwidth}X}

  \multicolumn{2}{l}{\ul{Mathematical notation:}} \vspace{1mm} \\
  $\oplus$ & Addition modulo $2$. \\
  $\vee$ & Logical "or" between binary variables. \\
  $[N]$ & The set $\{1,...,N\}$ of integers ranging from 1 to $N$. \\
  $|A|$ & Size of the set $A$. \\
  $log$ & Base 2 logarithm function of strictly positive real numbers.\\ 
  $h(x)$ & Binary entropy function of $x\in]0, 1[$. $h(x)= -xlog(x) - (1-x)log(1-x)$ \\ 
  $\R \circ \S$ & The sequential composition of two systems. \\ 
  $\R || \S$ & The parallel composition of two systems. \\
  $\approx_\eps$ & Indistinguishability, except with probability $\eps$. \quad Between abstract systems, $R\approx_\eps S$ is relative to the distinguishing pseudo-metric, while between quantum states $\sigma \approx_\eps \gamma$ is relative to the trace distance.\\
  $Tr (A)$ & The trace of a matrix $A$. \\
  $\ket{+}$, $\ket{-}$ & Eigenstates of the Hadamard basis. \\
  $\ket{+_i}$, $\ket{-_i}$ & Eigenstates of the circular basis. \\
  $\Z$ & The Pauli $\Z$ single-qubit gate
  $\begin{bmatrix}
      1&0\\
      0&-1
  \end{bmatrix}$. $\Z^\frac 12$ is the phase gate 
  $\begin{bmatrix}
      1&0\\
      0&i
  \end{bmatrix}$ \vspace{4mm} \\ 
    
  \multicolumn{2}{l}{\ul{Abstract systems and protocols:}}    \vspace{1mm} \\
  $\C$ & Classical authenticated channel with random subset broadcast subroutine representing \Cref{ass:CAC}. \\
  $\SD^j_{PM}, \SD^j_{EB}$ & State distribution system of player $j$ for the prepare-and-measure and entanglement-based versions of the protocol respectively. \\
  $\PP$ & Post-processing system. \\
  $\P^j_{PM}, \P^j_{EB}$ & Player $j$'s whole system for the prepare-and-measure and entanglement-based versions of the protocol respectively. $\P^j_{PM} = \PP \circ \SD^j_{PM}$ and $\P^j_{EB} = \PP \circ \SD^j_{EB}$ \\
  $\QL_{PM}, \QL_{EB}$ & Whole system of an execution of the prepare-and-measure and entanglement-based versions of the protocol respectively.\\
  $\I$ & Ideal system for the secret sharing primitive. \\
  $\SIM$ & Simulator (see \Cref{fig:Isigma}). \\
  $\D$ & Distinguisher which is given $\S$ and tries to guess which of the two possible systems it is. \\
  $\S$ & Abstract system given to the Distinguisher. \\
\end{tabularx}
\end{table}

\begin{table}[!h]
\begin{tabularx}{\textwidth}{p{0.08\textwidth}X}

  \multicolumn{2}{l}{\ul{Protocol parameters:}}   \vspace{1mm} \\
  $\eps$ & Indistinguishability parameter (Security parameter). \\
  $\tau', \tau$ & Sizes of the sets $\T'$ and $\T$ respectively. \\
  $q$ & Qubit error rate. \\
  $\delta$ & Threshold of the qubit error rate $q$ for the protocol to proceed after error estimation. \\
  $\nu$ & Error correction margin. It is to be chosen by the players to adjust the final share size $K$ and security parameter$\eps$. \\
  $\chi$ & Bit length of the syndromes $(w^j)_{j\in[J]}$.\\
  $\eta$ & Correctness parameter and binary size of the output of $f_{cc}$. \\
  $f_{cc}$ & Correctness check hash function mapping $\{0, 1\}^{M}$ to $\{0, 1\}^{\eta}$. \\
  $f_{pa}$ & Privacy amplification hash function mapping $\{0, 1\}^{M}$ to $\{0, 1\}^{K}$. \vspace{5mm} \\
  
  \multicolumn{2}{l}{\ul{Indices and Sets:}}  \vspace{1mm} \\
  $J$ & Total number of players ($j\in[J]$). \\ 
  $\Ho$ & The set of honest players ($j\in\Ho$). The elements of $\Ho$ are denoted $j_1, ..., j_H$. \\ 
  $H$ & Number of honest players. $H = |\Ho|$ \\ 
  $N$ & Number of rounds of the protocol ($n\in[N]$).  \\
  $L$ & Number of remaining data bits after sifting. $L = N-|\U|$ \\
  $M$ & Number of remaining data bits after error estimation. $M = L-\tau$ \\
  $K$ & Bit size of the final shares. \\
  $\T', \T$ & Set of indices of the check bits, before and after sifting respectively. \\
  $\U$ & Set of indices of the rounds discarded for uncorrelated basis choices. $\U\subset[N]$ \vspace{5mm} \\
  
  \multicolumn{2}{l}{\ul{Other notation:}}  \vspace{1mm} \\
  $QKD$ & Quantum Key distribution protocol. \\
  $\QKD$ & A slightly modified QKD protocol introduced \Cref{sec:EB_security} and used as a tool for the security proof\\
  $j_1, ..., j_H$ & The elements of $\Ho$ in their corresponding order on the Qline (meaning  $j_1< ...<j_H$)\\
  $b_n^j, b^j$ & $b_n^j$: Basis bit of player $j$ for round $n$ in $\QL_{EB}$.\quad\hspace{0.1mm} $b^j = (b_n^j)_{n\in[N]}$. \\
  $v_n^j, v^j$ & $v_n^j$: Value bit of player $j$ for round $n$ in $\QL_{EB}$.\quad\hspace{0.1mm} $v^j = (v_n^j)_{n\in[N]}$ \\
  $w^j$ & Syndrome of the raw key $(v_n^j)_{n\in[M]}$ of player $j$ announced during the error correction step \\ %
  $s^j$ & Final share of player $j$. Share output for player $j$ in systems performing secret sharing. \\
  $\rho_{n, in}^j$ & Single-qubit quantum state input of player $j$ for round $n$. $\rho_{in}^j = (\rho_{n, in}^j)_{n\in[N]}$ \\ 
  $\rho_{n, out}^j$ & Single-qubit quantum state output of player $j$ for round $n$. $\rho_{out}^j = (\rho_{n, out}^j)_{n\in[N]}$ \\
  $t$ & Whole transcript of all the classical authenticated communications that occur during the protocol. \\
  $\ell$ & Blocking lever. \\
  $\rho^\Ho_{in}, \rho^\Ho_{out}, t^{[J]\bs\Ho}, \ell, s^\Ho$ & \quad\quad\quad\quad\quad\quad\quad\quad inputs and outputs of $\QL_{EB}$. \\
  $\tilde \rho^\Ho_{in}, \tilde \rho^\Ho_{out}, \tilde t^{[J]\bs\Ho}, \tilde \ell, \tilde s^\Ho$ & \quad\quad\quad\quad\quad\quad\quad\quad inputs and outputs of $\I\circ\SIM$. \\
  $\rho^\Ho_{\S,in}, \rho^\Ho_{\S,out}, t^{[J]\bs\Ho}_\S, \ell_\S, s^\Ho_\S$ & \quad\quad\quad\quad\quad\quad\quad\quad  inputs and outputs of $\S$ (in \Cref{sec:EB_security}). \\

 \end{tabularx}
\end{table}

For any variable $x$ and any sets or families $A$ and $B$, $x^A_B$ denotes $(x^a_b)_{a\in A, b\in B}$.

\clearpage

\section{Proofs}

    \subsection{Proof of \Cref{prop:QKD'_sec}} \label{app:proof_QKD'sec}
        The security of entanglement-based QKD has been shown by several different works (\cite{TL17,Portman_Renner14,BF12}). More concretely, under Assumptions~\ref{ass:Urandom}, \ref{ass:Sealed_labs} and \ref{ass:CAC}, there exists $\eps_{QKD}$ negligible in $N$ such that the key obtained by Alice through this protocol is indistinguishable to a uniformly random bit string of the same length except with probability $\eps_{QKD}$. 
                        
        We discuss here why the three differences of $\QKD$ preserve this indistinguishability. 
        \begin{enumerate}
        
            \item First, we notice that difference~\ref{item:diff_QKD'_1} does not impact the security because the quantum communication channel in $QKD$ is assumed to be controlled by the adversary. Any attack on a $QKD$ protocol with such modification can be turned into an attack on $QKD$ by simply applying Bob{}$_\QKD$'s rotation to the input states of Bob{}$_{QKD}$.
            
            \item Difference~\ref{item:diff_QKD'_2} preserves the main properties of the subset for the error rate computation, which are that it is uniformly random (among all subsets of the same size) and independent of the secrets of the players. These are indeed sufficient properties to prove that the error estimation gives a good estimate of the amount of errors outside that subset \footnote{Such a proof can be found in \cite{TL17}, Proposition 8.}.
                 
            \item For the case of difference~\ref{item:diff_QKD'_3} we introduce some notation. Let $K_A$ and $K_B$ be the respective raw keys (after sifting and before error correction) of Alice and Bob in QKD. These keys are linked by the relation $K_A = K_B \oplus e$ where $e$ is the error string that Bob ought to identify during error correction. By the linearity of the syndrome function $w(\cdot)$ of the error correcting code, we have that $w(K_A) = w(K_B) \oplus w(e)$. Hence, During the error correction step, as the eavesdropper learns $w(K_A)$, difference~\ref{item:diff_QKD'_3}, which leaks $w(K_B)$ to the eavesdropper, is equivalent to leaking $w(e)$. The impact of the leakage of $w(e)$ (or equivalently of $e$) in QKD protocols has been studied in \cite{syndrome_leakage}. Under our assumptions, and because the protocol does not make use of more statistics than the qubit error rate during the parameter estimation step, this leak does not reveal any additional information on $K_A$ to the eavesdropper.
            As a consequence, difference~\ref{item:diff_QKD'_3} does not influence the security of $K_A$.
            
        \end{enumerate}

         In conclusion, none of the differences between standard QKD and $\QKD$ influences the indistinguishability of Alice's final key to a random bit string. This concludes the proof.

    \subsection{Proof of Proposition \ref{prop:reduction}} \label{app:proof_QLD} 
        Let $\M^\I_{[h-1]}$ and $\M^\I_{[h]}$ be systems that respectively take as input the shares $s^{j_{[h-1]}}$ and $s^{j_{[h]}}$ and outputs instead the same amount of random bit strings of the same size.
        With this notation, our assumption on the distinguisher amounts to the fact that $\D_\QL$ distinguishes $\QL_{EB}\circ (\M_H || \M^\I_{[h-1]})$ from $\QL_{EB}\circ (\M_H || \M^\I_{[h]})$ with probability $\eps_\D$.

        We wish to show that
        \begin{equation} \label{eq:A_QL}
             \QKD\circ\M_{Bob} \circ \A 
             \approx_0
             \QL_{EB}\circ (\M_H || \M^\I_{[h-1]}),
        \end{equation}
        and
        \begin{equation} \label{eq:A_I}
             (\QKD\circ\M_{Bob}\circ\M^\I_{Alice}) \circ \A  
             \approx_0 
             \QL_{EB}\circ (\M_H || \M^\I_{[h]}).
        \end{equation}

        We first focus on \Cref{eq:A_QL}.
        Note that in the two systems $\QKD\circ\M_{Bob}\circ\A $ and $\QL_{EB}\circ (\M_H || \M^\I_{[h-1]})$, the behavior of all players $j\in\Ho\bs\{j_h, j_H\}$ is the same.

        Furthermore, one can see that in both systems:
        \begin{itemize}
            \item At the state distribution step, the input state of players $j_h$ and $j_H$ undergoes identical CNOT gates with freshly generated qubits and is output back. The remaining qubits are then measured in random basis (for Bob, the basis are $b_{[N]}^{j_H}$ in $\QL_{EB}$ and $\hat b_{[N]}^{Bob}$ in $\QKD$).
            \item At the sifting step, in $\QL_{EB}$, a round $n$ is discarded if (see \Cref{eq:hatb_def})
            \begin{align*} 
                &b_n^{j_h} \oplus 
                b_n^{j_H} \oplus 
                \big[ \bigoplus\limits_{j \in [J]\bs \{j_h, j_H\}} b^j_n \big]
                \ne 0\\
            \numberthis \label{eq:EB_br_cond}
           & \Leftrightarrow \quad
               b_n^{j_h} \oplus 
                b_n^{j_H} \oplus 
                b^{(\D)}_n
                \ne 0\\
            &\Leftrightarrow \quad 
               b_n^{j_h} \ne \hat b_n^{Bob}
            \end{align*}
            which is the exact discarding condition of the sifting step of $\QKD\circ\M_{Bob}\circ\A$. 
            
            \item At the error estimation step in $\QL_{EB}$, a round $n$ is considered as erroneous if 
            \begin{align*}
                        &(2v_{n}^{j_h} + b_{n}^{j_h}) + (2v_{n}^{j_H} + b_{n}^{j_H}) + 
                        \big(\sum\limits_{j \in [J]\bs \{j_h, j_H\}} ( 2 v^j_{n} + b^j_{n} ) \big) &= 2 \mod 4 \\
                        \numberthis \label{eq:err_est_cond}
                       & \Leftrightarrow \quad2 (v_{n}^{j_h} + v_{n}^{j_H} + v_{n}^{(\D)}) + b_{n}^{j_h} +  b_{n}^{j_H} + b_{n}^{(\D)} &= 2 \mod 4.
            \end{align*}
            Note that because of the sifting step, all rounds satisfying \Cref{eq:EB_br_cond} have been discarded. As a consequence, for all ${n} \in [\T]$, $b_{n}^{j_h} = b_{n}^{j_H} \oplus b_{n}^{(\D)}$ and thus $b_{n}^{j_h} + b_{n}^{j_H} + b_{n}^{(\D)} = 2 (b_{n}^{j_H} \vee b_{n}^{(\D)})$\footnote{This is true for any binary $a$, $b$, $c$: $a=b\oplus c \Rightarrow  a+b+c = 2(b\vee c)$.}. Condition~{(\ref{eq:err_est_cond})} is then equivalent to (see \Cref{eq:hatv_def})
            \begin{align*}
                        & 2 \big(v_{n}^{j_h} + v_{n}^{j_H} + v_{n}^{(\D)} + (b_{n}^{j_H} \vee b_{n}^{(\D)}) \big) = 2 \mod 4 \\
                        &\Leftrightarrow \quad v_{n}^{j_h} + v_{n}^{j_H} + v_{n}^{(\D)} + (b_{n}^{j_H} \vee b_{n}^{(\D)}) = 1 \mod 2\\
                       & \Leftrightarrow\quad v_{n}^{j_H} \oplus v_{n}^{(\D)} \oplus (b_{n}^{j_H} \vee b_{n}^{(\D)}) \ne v_{n}^{j_h}\\
                      &  \Leftrightarrow\quad \hat v_{n}^{Bob} \ne  v_{n}^{j_h}
            \end{align*}
                
            which is exactly the condition under which rounds are considered erroneous in $\QKD\circ\M_{Bob}\circ\A$.
            \item Since the sifting and error estimation are similar, the announcements are made following the same process. Moreover, in both systems, they are based on the basis used to measure the state and the outcome of that measurement. This is indeed directly the case for $\QL_{EB}$. In $\QKD\circ\M_{Bob}\circ\A$, the announcements are those of Bob{}$_\QKD$ based on $\hat v_{[N]}^{Bob}$, later XOR-ed by $\A$ with announcements corresponding to $v_{[N]}^\A = \hat v_{[N]}^{Bob} \oplus v_{[N]}^{Bob}$ (see \Cref{eq:vA_def}). By linearity of the error correction and correctness check, the modified announcements exactly match the ones corresponding to the actual outcomes $v_{[N]}^{Bob}$ of the measurements of Bob{}$_\QKD$.
        \end{itemize}
        As a consequence, $\QKD\circ\M_{Bob}\circ\A $ and $\QL_{EB}\circ (\M_H || \M^\I_{[h-1]})$ are indistinguishable, which gives \Cref{eq:A_QL}.

        The same reasoning applies for \Cref{eq:A_I}, the only difference being the share outputs of players $j_{[h]}$ that are replaced by randomly sampled bit strings in both systems.

        Composing $\D_\QL$ onto \Cref{eq:A_QL,eq:A_I}, and using the property that $\D_\QL$ distinguishes $\QL_{EB} \circ (\M_H || \M^\I_{[h-1]})$ from $\QL_{EB} \circ (\M_H || \M^\I_{[h]})$ with probability $\eps_\D$, we get that the system $\A\circ\D_\QL$ distinguishes $\QKD\circ\M_{Bob}$ from $\QKD\circ\M_{Bob}\circ\M^\I_{Alice}$ with that same probability $\eps_\D$.
        This concludes the proof.

\end{document}